\documentclass[10pt, nofootinbib, showpacs,letterpaper,aps,pra,twocolumn,superscriptaddress,floatfix]{revtex4-2}
\pdfoutput=1
\usepackage{revsymb, amsmath, amsfonts, amssymb, enumerate, amsthm, graphicx, braket, relsize, bbm, mathrsfs, mathtools}
\usepackage{adjustbox}
\usepackage[normalem]{ulem}

\usepackage{graphicx,color}
\usepackage{paralist}

\usepackage[utf8]{inputenc}
\usepackage{xfrac}
\usepackage{tabularx}
\usepackage{array}
\usepackage{thm-restate}
\UseRawInputEncoding

\newtheorem{theo}{Theorem}

\newtheorem{lem}[theo]{Lemma}
\newtheorem{cor}[theo]{Corollary}

\newtheorem{quest}{Question}
\newtheorem{resu}{Result}
\newtheorem{nresu}{Numerical Result}

\newcommand{\id}{\mathbb{I}}

\newcommand{\tr}[2]{\mathrm{tr}_{#2} \left\{ #1 \right\}}
\newcommand{\Tr}[1]{\mathrm{tr}\left\{#1 \right\}}

\usepackage{soul}

\usepackage[linktocpage=true, colorlinks=true, linkcolor=blue, urlcolor=blue, citecolor=blue]{hyperref}
\begin{document}

	\title{Typicality of Contextuality}
	
	\author{Vinicius P. Rossi}
	\email{vinicius.pretti-rossi@ug.edu.pl}
	\affiliation{International Centre for Theory of Quantum Technologies, University of  Gda{\'n}sk, 80-309 Gda{\'n}sk, Poland}

	\author{Beata Zjawin}
	\affiliation{International Centre for Theory of Quantum Technologies, University of  Gda{\'n}sk, 80-309 Gda{\'n}sk, Poland}
	
	\author{Roberto D. Baldij{\~a}o}
	\affiliation{International Centre for Theory of Quantum Technologies, University of  Gda{\'n}sk, 80-309 Gda{\'n}sk, Poland}
	\affiliation{Perimeter Institute for Theoretical Physics, 31 Caroline Street North, Waterloo, Ontario N2L 2Y5, Canada}
	
	\author{David Schmid}
	\affiliation{Perimeter Institute for Theoretical Physics, 31 Caroline Street North, Waterloo, Ontario N2L 2Y5, Canada}
	
	\author{John H. Selby}
	\affiliation{International Centre for Theory of Quantum Technologies, University of  Gda{\'n}sk, 80-309 Gda{\'n}sk, Poland}
	
	\author{Ana Bel\'en Sainz}
	\affiliation{International Centre for Theory of Quantum Technologies, University of  Gda{\'n}sk, 80-309 Gda{\'n}sk, Poland}
	\affiliation{Basic Research Community for Physics e.V., Germany}

	\begin{abstract}
		Identifying when observed statistics cannot be explained by any reasonable classical model is a central problem in quantum foundations. A principled and universally applicable approach to defining and identifying nonclassicality is given by the notion of generalized noncontextuality. Here, we study the {\em typicality} of contextuality — namely, the likelihood that randomly chosen quantum preparations and measurements produce nonclassical statistics. Using numerical linear programs to test for the existence of a generalized-noncontextual model, we find that contextuality is fairly common: even in experiments with only a modest number of random preparations and measurements, contextuality arises with probability over $99\%$. We also show that while typicality of contextuality decreases as the  purity (sharpness) of the preparations (measurements) decreases, this dependence is not especially pronounced, so contextuality is fairly typical even in settings with realistic noise.  Finally, we show that although nonzero contextuality is quite typical, quantitatively high {\em degrees} of contextuality are not as typical, and so large quantum advantages (like for parity-oblivious multiplexing, which we take as a case study) are not as typical.  We provide an open-source toolbox that outputs the typicality of contextuality as a function of  tunable parameters (such as lower and upper bounds on purity and other constraints on states and measurements). This toolbox can inform the design of experiments that achieve the desired typicality of contextuality for specified experimental constraints.
		
	\end{abstract}

	\maketitle

	\section{Introduction}

	Certain quantum experiments exhibit correlations that are incompatible with any classical explanation \cite{bell1964on,kochen1967problem,spekkens2005contextuality,bong2020strong}. Deciding whether a given experiment is genuinely quantum in this sense is a central question in the foundations of quantum mechanics. Among the different rigorous approaches to formalizing nonclassicality, the framework of generalized noncontextuality~\cite{spekkens2005contextuality} comes forward as one of the most well-motivated and widely applicable. In this framework, an operational behavior is classically explainable if it can be reproduced by a generalized-noncontextual ontological model, and ruling out the possibility of any such model provides a certification of nonclassicality. 
	
	Contextuality is important both for its foundational implications and for its role as a resource for quantum technologies. Indeed, the fact that quantum theory features contextuality enables many of the quantum advantages over classical communication and information processing tasks. More specifically, generalized contextuality is known to provide the advantage for tasks concerning communication~\cite{spekkens09,saha2019,sumit23, roch22,fonseca25}, computation~\cite{schmid22}, machine learning~\cite{bowles23}, information processing~\cite{spekkens09, ambainis19, chailloux16, yadavalli22}, metrology~\cite{lostaglio20}, state-dependent cloning~\cite{lostaglio22}, and state discrimination~\cite{schmid2018contextual,shin21,flatt22,mukherjee22}. Furthermore, many instances of nonclassicality have been proven to be closely related to generalized contextuality, such as the notions of nonclassicality arising in the study of Bell nonlocality~\cite{SEER,schmid18ineq,wright2023invertible}, the detection of anomalous weak values~\cite{kunjwal2019anomalous}, the observation of anomalous heat flow~\cite{naim24}, and the non-existence of a quasi-probability representation in quantum optics~\cite{NEGATIVITY,Schmid2024structuretheorem}. 
	
	Foundationally, noncontextuality can be motivated as a notion of classical explainability by a number of complementary perspectives. A first motivation leverages a version of Leibniz's principle (see Sec.~\ref{se:prelim}), which has historically seen great success in the construction of compelling physical theories~\cite{spekkens2019ontological}. Another important motivation is the equivalence of this notion with the possibility of a simplex-embedding in generalized probabilistic theories~\cite{schmid2021characterisation}. Moreover, noncontextuality can also be shown to encompass previous notions of classicality, such as that emerging through quantum Darwinism processes~\cite{baldi21} or within macroscopic realism~\cite{schmidmacro}.
	
	A natural question is: how often does contextuality ``accidentally'' arise in an experiment? Could it be that nearly any experimental quantum setup is hiding some sort of nonclassical resource? When studying nonclassicality in and of itself, one usually engineers a specific set of measurements that need to be performed on some carefully chosen set of state preparations. However, even in experiments that are not deliberately chosen to test for contextuality, one can ask whether contextuality arises. Moreover, real experiments may have noise, imperfect control, lack of shared reference frames between the preparation stage of the experiment and the measurement stage of the experiment, or other limitations. How likely is one to generate contextuality nonetheless in such cases? Here we will study these questions by performing analytical and numerical explorations of contextuality scenarios where states and/or measurements are chosen randomly.
	
	More specifically, we aim to answer the following questions:
	
	\begin{quest}\label{q:1}
		
		How typical is it to stumble upon contextuality when we randomly sample a finite number of pure states and projective measurements? How does this typicality change with the number of states and the number of measurements? 
		
	\end{quest}
	
	\begin{quest}\label{q:2}
		
		How does the typicality of contextuality depend on whether the states are pure or not and on whether the measurements are projective or not?
		
	\end{quest}
	
	\begin{quest}\label{q:3}
		
		If one typically stumbles upon contextuality, does one also typically stumble upon a useful resource for quantum advantage?
		
	\end{quest}
	
	To explore the typicality of contextuality, we employ linear programming methods in order to test whether randomly sampled states and effects admit of a (generalized) noncontextual model \cite{selby2024linear}. This approach allows us to computationally determine how often nonclassicality occurs under different conditions. Our work introduces novel ways to investigate contextuality in prepare-and-measure scenarios and complements recent analytical developments that focus on continuous sets of preparations and measurements~\cite{zhang2025reassessing,zhang2025quantifiers}. Our findings shed light on how common contextuality is in quantum systems and provide a quantitative basis for understanding its typical features. Moreover, our implementation can easily be adapted to describe specific experimental setups. Our results and method have the potential to become a tool that informs scientists on how to best design their experimental setups to potentially feature contextuality.

	\subsection{Preliminaries}\label{se:prelim}
	
	The notion of contextuality was first formalized by Kochen and Specker~\cite{kochen1967problem}, to capture the quantum feature that propositions in a quantum world are not necessarily simultaneously decidable—you can write down a set of questions such that you can answer each two of them simultaneously, but you cannot answer all of them simultaneously. From a physicist's perspective, this broadly meant that for a classical hidden variable model to explain certain quantum statistics, the outcome probability would have to depend on the measurement context. It was later shown that the celebrated Bell's theorem~\cite{bell1964on} can be viewed as a special case of Kochen-Specker contextuality~\cite{Acn2015}.  Generalized contextuality, introduced by Spekkens, further elaborates on this idea by formalizing the broader concept of generalized-noncontextual ontological models, which relies on Leibniz's principle of the \textit{identity of indiscernibles}~\cite{spekkens2019ontological}. By this principle, any attempt to provide a further (classical) explanation of quantum theory should consider indistinguishable quantum protocols as being fundamentally equivalent. Spekkens has shown that any ontological models for quantum theory satisfying this assumption will eventually fail to reproduce quantum predictions, deeming quantum theory as contextual.
	
	An operational theory~\cite{hardy2001quantum,barrett2007information,dariano2017} is a framework that describes a physical theory based on laboratory elements so that all the elements in the theory have a clear mapping onto procedures and concepts connected to experimental practice, without an \emph{a priori} demand for an ontological explanation. For instance, in a prepare-and-measure scenario, an operational theory provides the set of all possible ways a system can be prepared ($\mathcal{P}$), all possible ways in which its physical properties can be measured ($\mathcal{M}$), and the outcome labels for each measurement ($K$). Moreover, an operational theory provides a probability rule $p$ for predicting the statistics of any experiment (in our case, prepare-and-measure setups), given by the conditional probability distribution $p(k|M,P)$ for a preparation $P\in\mathcal{P}$ and a measurement outcome $k|M\in K\times\mathcal{M}$. Finally, an operational theory comes equipped with an equivalence relation (here denoted by $\sim$), deeming procedures equivalent if they yield the same values in the probability rule in any possible experiment.  For example, 
	\begin{equation}
		P\sim P'\iff p(k|M,P)=p(k|M,P'),\quad\forall \, k|M\in K\times\mathcal{M};
	\end{equation}
	\begin{equation}
		k|M\sim k'|M'\iff p(k|M,P)=p(k'|M',P),\quad\forall \, P\in\mathcal{P}.
	\end{equation}

	Quantum theory can be recast as an operational theory, in which equivalence classes of preparation procedures are given by density operators, equivalence classes of measurement outcomes by POVM elements, and the probability rule is given by the Born rule. The equivalence relation is naturally captured by the linearity of the space of Hermitian operators.
	
	Another ingredient that has been recently identified as crucial for defining a noncontextual ontological model is that of \emph{diagram preservation}~\cite{Schmid2024structuretheorem,schmid2021unscrambling}. The main idea captured by this notion is that in the description of an experiment, there is a specification of its causal structure--- i.e., what variables can influence which others. Diagram preservation requires taking this structure as something fundamental and imposes that the compositional structure of the ontological model be the same as the causal structure (e.g., the quantum circuit). For example, if the experiment is a Bell scenario, diagram preservation will demand that the response functions for the measurements factorize into one for Alice and one for Bob.
	
	In the prepare-and-measure scenario (which we study in this work), a diagram-preserving ontological model consists of a measurable space $\Lambda$, conditional probabilities $\mu_P(\lambda):\Lambda\to[0,1]$, and response functions $\xi_{k|M}(\lambda):\Lambda\to[0,1]$ with $\sum_k\xi_{k|M}(\lambda)=1$, $\forall\lambda\in\Lambda$ and $M\in\mathcal{M}$, such that
	\begin{equation}
		p(k|M,P)=\sum_{\lambda\in\Lambda}\mu_P(\lambda)\xi_{k|M}(\lambda).
	\end{equation}
	An ontological model in a prepare-and-measure scenario will be a \emph{noncontextual} ontological model when operational equivalences translate into ontological identities. For instance, 
	\begin{equation}
		P\sim P'\implies \mu_P(\lambda)=\mu_{P'}(\lambda),\quad\forall\lambda\in\Lambda;
	\end{equation}
	\begin{equation}
		k|M\sim k'|M'\implies \xi_{k|M}(\lambda)=\xi_{k'|M' }(\lambda),\quad\forall\lambda\in\Lambda.
	\end{equation}
	
	\bigskip
	
	Given a set of preparations and measurements in a prepare-and-measure scenario and the equivalence relation in the operational theory thereof, then, the question is whether the conditional probability distribution $\{p(k|M,P)\}_{P\in\mathcal{P},k|M\in K\times\mathcal{M}}$ admits of a noncontextual ontological model. This question can be systematically tackled by a linear program \cite{selby2024linear} that leverages the geometrical underpinning of generalized contextuality in terms of simplex embeddings \cite{schmid2021characterisation}. As input to the linear program, one provides the density matrices $\{\rho_P\}_{P\in\mathcal{P}}$ that correspond to the state of the quantum system for the different preparations $P\in\mathcal{P}$, and the positive-semidefinite matrices $\{E_{k|M}\}_{k|M\in K\times\mathcal{M}}$ that correspond to the operators for the different measurement outcomes. The program then outputs a decision on whether the preparations and measurements are `simplex embeddable', i.e., whether the statistics $\{p(k|M,P)\}_{P\in\mathcal{P},k|M\in K\times\mathcal{M}}$ admit of a noncontextual ontological model. 
	
	In this work, we randomly sample prepare-and-measure scenarios with a fixed number of preparations and measurements, drawing states and effects from specified distributions. For each scenario, we run the linear program mentioned above to certify nonclassicality. By repeating this procedure many times, we can check how often nonclassicality arises as a function of the scenario parameters, such as the number of preparations (or measurements) and the purity (or projectivity) of these.
	
	\bigskip
	
	\subsection{Methods}\label{sec:methods}
	
	As previously stated, the linear program provided in Ref.~\cite{selby2024linear} assesses the contextuality of a prepare-and-measure scenario by constructing the associated generalized probabilistic theory (GPT) fragment and assessing whether this fragment admits of a simplex embedding. The program takes in a set of density operators $\Omega=\{\rho_P\}_{P\in\mathcal{P}}$ and a set of POVM elements $\mathcal{E}=\{E_{k|M}\}_{k|M\in K\times\mathcal{M}}$ and assesses how much partially depolarizing noise $0\leq r\leq1$ must act on the states of the scenario until a noncontextual model of it becomes possible. The quantity $r$, therefore, is a witness of nonclassicality, since any scenario with $r>0$ is necessarily contextual. This quantity has been previously named \emph{robustness of contextuality}. A summary of the linear program and its implementation\footnote{Available in Mathematica at \url{https://github.com/eliewolfe/SimplexEmbedding}~\cite{elie-git} and Python at \url{https://github.com/pjcavalcanti/SimplexEmbeddingGPT}~\cite{cavalcanti-git}, with the latter being employed in this work.} is given in Appendix~\ref{app:LP}.
	
	To study the typicality of quantum contextuality, we therefore perform extensive simulations of contextuality certification. We characterize each simulation by a tuple of parameters $(n,m,d; N)$, where $n$ is the number of preparations and $m$ is the number of binary measurements that comprise the prepare-and-measure scenario, $d$ is the dimension of the Hilbert space associated to the quantum system being investigated, and $N$ is the number of sampling iterations and assessments of contextuality. We begin by randomly sampling $n$ density matrices of dimension $d \times d$ and $m$ pairs of POVM elements, each of which rescaled to ensure normalization (amounting to a total of $2m$ effects). We then employ the linear program~\cite{selby2024linear} to determine whether the statistics specified by $(n,m,d)$ admit a generalized-noncontextual model. 
	
	This procedure is repeated $N$ times, with new random samples of states and effects in each run. We define the \emph{typicality of contextuality}, denoted by $t_{(n,m,d;N)}$, for a given scenario  $(n,m,d;N)$  as the percentage of samples (out of $N$) that exhibit contextual behavior, i.e., for which the linear program assigned non-zero robustness $r$. When referring to the analytical typicality of contextuality that would emerge out of an infinite number of trials, we adopt $t_{(n,m,d)}$, suppressing the label $N$. 
	
	Notice that, throughout this work, stating $t_{(n,m,d;N)}\approx100\%$ does not mean that contextuality will always be found for $n$ preparations and $m$ binary measurements for a quantum system of dimension $d$, but rather that out of $N$ tries, every single one of them exhibited contextuality by our certification criteria. To calculate the confidence level of the true value of $t_{(n,m,d)}$, we use the Wilson score interval~\cite{wilson}, which we explain in detail in Appendix~\ref{app:threshold}. In summary, when we say $t_{(n,m,d;N=10^6)}\approx100\%$ in this work, this criterion ensures that the typicality of contextuality $t_{(n,m,d)}$ is at least 99.999\% with 99\% level of confidence. In this paper, we focus on qubit states $(d=2)$, but the general approach and the simulation toolbox that we share in an online repository are also suitable for higher dimensions.  
	
	To generate random states, we employ the \texttt{qutip} toolbox \cite{lambert2024qutip5quantumtoolbox}. Pure random states distributed with a Haar measure are generated by applying a Haar random unitary to a fixed pure state (\texttt{rand\_ket} function). To generate mixed states, we follow the algorithm designed in Ref.~\cite{Bruzda2009} that generates Ginibre random density operators of fixed rank (\texttt{rand\_dm} function). Since only full-rank density matrices have nonzero measure~\cite{Bengtsson2017}, sampling from mixed states will never lead to pure states. More details on random sampling are provided in Appendix~\ref{app:qutip}. To sample the effects, there are two approaches that we employ depending on the task--- randomly sampling effects or fixing a large number of (somewhat uniform) effects. For the first approach, we employ \texttt{qutip}'s random unitary generator to sample projective measurements by rotating an orthornomal basis; and \texttt{numpy}'s random generator function to sample POVMs by generating sets of Hermitian operators from the Ginibre ensamble, then enforcing measurement normalization~\cite{kukulski2021} (See Appendix~\ref{app:qutip}).
	For the second task, the idea is to fix a large number of effects that are distributed in all directions on the Bloch sphere. We choose the following parametrization:
	\begin{equation}\label{90effects}
		\ket{j,k}=\left(\begin{array}{c}
			\sin\left(\frac{j\pi}{20}\right)\\
			e^{i\frac{k\pi}{5}}\cos\left(\frac{j\pi}{20}\right)
		\end{array}\right)
	\end{equation}
	for $j=\{0,\dots,9\}$ and $k=\{0,\dots,4\}$. The effects built from these states, together with their normalizing counterparts, form $m=92$ distinct projectors. These projectors can be visualized as points distributed symmetrically around the Bloch sphere along distinct inclinations between the north and south poles, with each point representing a unique measurement direction. Including their antipodal points, this gives the full set of 184 distinct effects hereby considered, which we depict in Fig.~\ref{fig:fixedeffects}.  This set of effects is designed to approximate the set of all projective effects. 
	
	\begin{figure}
		\centering
		\includegraphics[width=0.6\linewidth]{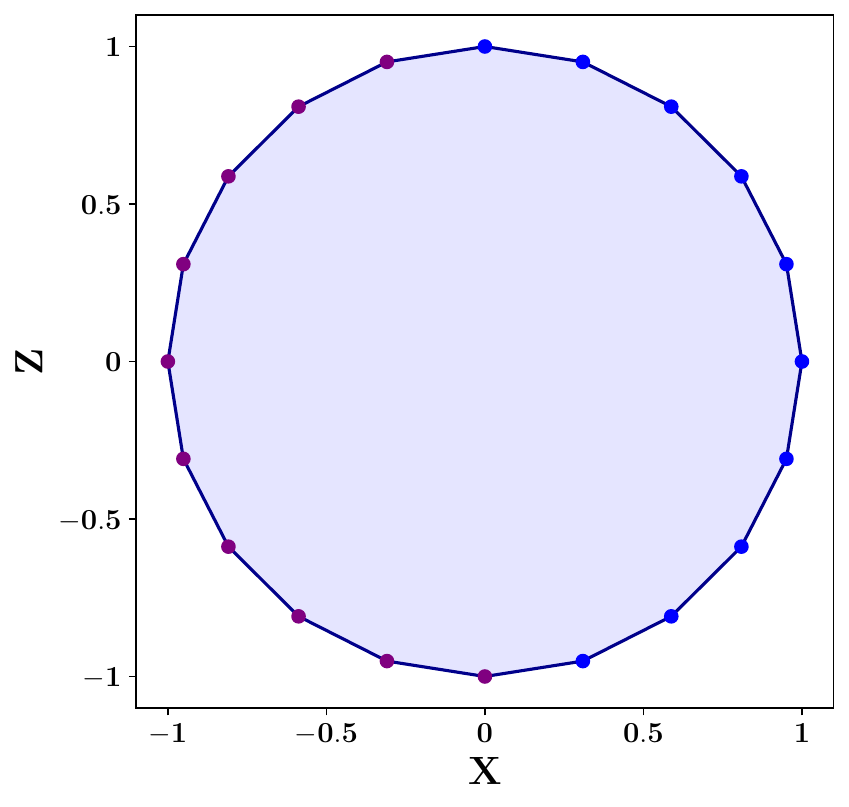}
		\caption{Visualization of the effects $\ket{j,k}$ from Eq.~\ref{90effects} (blue dots) and their antipodal counterparts (purple dots) for $k=0$, showing only the ZX-hemisphere of the Bloch sphere, for clarity. The full set of effects is obtained by additionally rotating this polygon 4 times in steps of $\frac{\pi}{5}$ around the Z axis of the Bloch sphere.}
		\label{fig:fixedeffects}
	\end{figure}
	
	The results of our analysis depend on several technical details, such as choosing the number of runs $N$ in the simulations to be appropriately large. Moreover, for large values of $N$, it is important to specify a suitable threshold for the robustness of contextuality output by the linear program, below which we classify statistics as classical. This is needed since numerical approximations and errors can lead to non-zero robustness even for classical scenarios. This threshold may depend on the scenario, the solver employed by the linear program, and the internal optimization settings. In the following section we will always adopt $N=10^6$ and the threshold to be $10^{-7}$ (i.e., whenever $r\leq 10^{-7}$ the data will be deemed noncontextual), and the investigation that backs these decisions (alongside the analysis of the best solver for the linear program for this task) is provided in Appendix~\ref{app:threshold}. 
	
	\section{Typicality of contextuality}

	\subsection{The impact of the number of states and measurements}
	\label{sec:results1}

	We start this section with a simple analytical result: that if the number of states is too small for the given system's dimension, then an experiment with random sampling will virtually always admit of a classical explanation.
	
	\begin{lem}[\textbf{Typicality zero for scenarios $\mathbf{(n<d^2,m,d)}$ }] \label{lemma:TipiZeroForLI}\quad\\
		\noindent Consider a quantum system of Hilbert space dimension $d$, and a typicality scenario $(n,m,d)$ with $n\leq d^ 2$ quantum states (randomly sampled without any additional restrictions). Then, for any number of measurements $m$, the typicality is $t_{(n,m,d)}=0\%$.
		
	\end{lem}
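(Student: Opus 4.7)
My plan is to establish that for $n \leq d^2$, the randomly sampled density operators are linearly independent in the $d^2$-dimensional real vector space $\mathcal{B}_h(\mathbb{C}^d)$ of Hermitian operators with probability one, and to exhibit an explicit simplex embedding of the GPT fragment they generate. By the characterization of generalized noncontextuality as simplex embeddability~\cite{schmid2021characterisation,selby2024linear}, this will imply that a noncontextual ontological model exists, so the robustness returned by the linear program vanishes and $t_{(n,m,d)} = 0$.

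For the genericity step, the sampling measures (Ginibre for mixed states and the Fubini--Study measure induced by the Haar measure for pure states) are absolutely continuous with respect to the natural reference measures on their supports, so any proper algebraic subvariety has probability zero. The set of linearly dependent $n$-tuples is the vanishing locus of the $n \times n$ minors of the matrix whose columns are the vectorized $\rho_i$'s in a fixed basis of $\mathcal{B}_h(\mathbb{C}^d)$. Because $n \leq d^2 = \dim_{\mathbb{R}} \mathcal{B}_h(\mathbb{C}^d)$, not all of these minors vanish identically, so the locus of linearly dependent configurations is a proper subvariety of measure zero. Hence, with probability one, $\{\rho_i\}_{i=1}^n$ are linearly independent.

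Given linear independence, I set $V = \mathrm{span}\{\rho_1, \ldots, \rho_n\}$ and define $\iota: V \to \mathbb{R}^n$ by $\iota(\rho_i) = e_i$, extended linearly; this is a well-defined linear isomorphism by linear independence. Then $\iota$ sends the scenario state cone into $\mathbb{R}^n_+$, while its adjoint $\iota^*$ maps the unit effect (restricted to $V$) to $(1, \ldots, 1)$ since each $\rho_i$ has unit trace, and maps any scenario effect $E_{k|M}$ to the vector with coordinates $\mathrm{tr}(E_{k|M}\rho_j)$ for $j = 1, \ldots, n$, which lies in $\mathbb{R}^n_+$ by the Born rule. This is a valid simplex embedding of the scenario's GPT fragment.

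The principal subtlety, and the main obstacle in making the argument fully rigorous, is that the ontological model read off directly from $\iota$ --- namely $\mu_{P_i}(\lambda) = \delta_{i,\lambda}$ together with $\xi_{k|M}(\lambda) = \mathrm{tr}(E_{k|M}\rho_\lambda)$ --- can fail to respect scenario-level operational equivalences among mixtures of preparations when the sampled effects are too few to distinguish those mixtures. The resolution is that the characterization theorem of Refs.~\cite{schmid2021characterisation,selby2024linear} only asserts the existence of some noncontextual ontological model compatible with a simplex embedding, and a coarser model (with fewer ontic states) can always be extracted when such equivalences arise; the existence of the embedding constructed above is therefore sufficient for the conclusion.
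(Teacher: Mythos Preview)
Your proof is correct and follows the same two-step strategy as the paper: (i) for $n \leq d^2$, randomly sampled density operators are almost surely linearly independent in the $d^2$-dimensional real space of Hermitian matrices, and (ii) linear independence of the states guarantees a noncontextual model for any choice of measurements. The paper obtains step~(ii) by citing Ref.~\cite{zhang2025reassessing} (linear independence implies noncontextuality even against the full effect space $\mathcal{E}^{\rm all}$) and then invoking transitivity of simplex-embeddability to pass to any finite $\mathcal{E}' \subset \mathcal{E}^{\rm all}$; you instead build the embedding directly via $\rho_i \mapsto e_i$ and $E \mapsto (\mathrm{tr}(E\rho_j))_{j}$, which is more self-contained but amounts to the same construction.

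Your final paragraph worries about a problem that does not actually arise in this paper's framework. Operational equivalence here is defined relative to \emph{all} quantum measurements---i.e., at the level of density operators---not relative to the finitely many scenario effects (see Sec.~\ref{se:prelim}). Since linearly independent $\{\rho_i\}$ satisfy no nontrivial convex identities, there are no preparation equivalences to respect, and your ontological model $\mu_{P_i}(\lambda)=\delta_{i,\lambda}$, $\xi_{k|M}(\lambda)=\mathrm{tr}(E_{k|M}\rho_\lambda)$ is already preparation-noncontextual without any coarse-graining. The caveat is therefore unnecessary, though it does no harm to the argument.
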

	
	\begin{proof}
		
		Recall that the space of $d\times d$ complex Hermitian matrices is of dimension $d^2$ and we can see states as vectors in that space. Therefore, if $n\leq d^ 2$, the set of samples such that $\{\rho_i\}_{i=1}^ n$ is linearly dependent is of measure zero. We can then consider a set $\{\rho_i\}_{i=1}^ n$, sampled at random with $n\leq d^ 2$, to be a linearly independent set.  As shown in Ref.~\cite{zhang2025reassessing}, if a set of states is linearly independent, then there are no prepare-and-measure experiments where one can witness contextuality, even if all measurements are considered. In other words, the fragment $(\{\rho_i\}_{i=1}^n,\mathcal{E}^ {\rm all})$ admits of a noncontextual model, where $\mathcal{E}^ {\rm all}$ denotes the set of all effects. This implies that any other fragment of the form $(\{\rho_i\}_{i=1}^n,\mathcal{E}')$, where $\mathcal{E}'\subset{\mathcal{E}^{\rm all}}$, is noncontextual (as simplex-embeddability is transitive~\cite{schmid2021unscrambling}).  
		
	\end{proof}
	
	Lemma~\ref{lemma:TipiZeroForLI} sets a threshold on the number of states: below it, randomly chosen experiments will almost always admit of a classical explanation. This highlights the fact that contextuality tests are usually {\em engineered} to be {\em non}-typical, since contextuality in qubit scenarios with 4 preparations has been extensively studied~\cite{schmid2018contextual,spekkens09,pusey2018,catani2024}. Such studies clearly do not consider randomly chosen states and effects—at a minimum, they choose states that all lie in a plane of the Bloch ball. (Typically, the states and measurements are chosen to satisfy further constraints as well, motivated by some particular physical situation or information processing task being considered.) One may hence ask the question: do we always need to engineer non-typical scenarios if we want to observe contextuality, even for general numbers of preparations and measurements?

	There are reasons to expect that the answer is negative, i.e., that contextual experiments become more typical as we increase the number of states and effects. By having more states (effects), one is likely to probe more of the qubit state (effect) spaces—which are, of course, contextual when considered in their entirety. Moreover, if a set of pure states $\{\ket{\psi_i}\!\bra{\psi_i}\}_{i=1}^n$ is linearly dependent, then the fragment $(\{\ket{\psi_i}\!\bra{\psi_i}\}_{i=1}^n,\mathcal{E}^{\rm all})$ is contextual~\cite{zhang2025reassessing}, which suggests that purity can matter. In the following numerical analyses, we explore more quantitatively how typicality of contextuality depends on these factors. 
	
	We start by exploring how different numbers of states and measurements in a qubit impact the likelihood of witnessing contextuality. Fig.~\ref{fig:3dplots}(a) and Fig.~\ref{fig:3dplots}(b) show the results of simulations characterized by $(n,m,d=2;N=10^6)$ for pure states with projective measurements and mixed states with POVMs, respectively. We consider only scenarios with $n\geq4,m\geq2$,  where these bounds are, on the one hand, motivated\footnote{ Lemma~\ref{lemma:TipiZeroForLI} suggests we could start with $n=5$ states, but we decided to start with $4$ as a way to control the numerical analysis (as we shall see) and also to help visualize the change in typicality as $n$ breaks the threshold.} by Lemma~\ref{lemma:TipiZeroForLI},   and, on the other hand, informed by the fact that a single binary measurement may never reveal contextuality \cite{selby2023contextuality}.  We increase the values of $n$ and $m$ up to 19 (inclusive), which is a considerably large number compared to most prepare-and-measure scenarios that are commonly explored in contextuality studies.
	
	One can see that for $n=4$, typicality of contextuality is given by $t_{(n=4,m,d=2;N=10^6)}=0\%$ for any number of measurements for both pure and mixed states as required by Lemma~\ref{lemma:TipiZeroForLI}. We further discuss this case in Appendix~\ref{app:threshold}, where the choice of solver, robustness threshold, and number of iterations was informed by demanding that  $t_{(n=4,m=2,d=2;N)}=0\%$ for the values of $N$ considered in this work, since deviations from this value strongly suggest issues with numerical approximations or errors. For $n>4$, typicality is non-zero, and it is interesting to observe that it quickly rises to approximately 100\% when the preparations are given by pure states. This suggests that linear dependence of (pure) states is most often a guarantee that the scenario will generate contextual correlations, even if one has access to a finite set of measurements.
	
	Comparing the plots for pure and mixed states, it is immediately apparent that scenarios with pure states and projective measurements require fewer preparations and measurements to achieve high typicality of contextuality and also reach the saturation of $t_{(n,m,d;N)}\approx 100\%$ faster.  For example, for pure states and effects, we observe $t_{(n,m,d;N)}>99\%$ already for $(n=7,m=8)$, while for mixed states and effects values above this range are never achieved for the considered values of $n$ and $m$, and the maximum value registered for this case is $t_{(n,m,d;N)}=97.7\%$. Moreover, notice that if one only has access to $m=4$ measurements, typicality is $t_{(n,m,d;N)}>99\%$ for all $n>10$ pure preparations. Additionally, contrarily to the pure case, the typicality of contextuality never reaches $t_{(n,m,d;N)}\approx100\%$ for the mixed case in this range of numbers of states and measurements, although we do expect that it would if $n$ and $m$ were increased further. This numerical exploration, which pertains to Questions \ref{q:1} and \ref{q:2} in the Introduction, can be summarized as follows. 
	
	\begin{figure*}[htb!]
		\centering
		a)\adjustbox{valign=t}{\includegraphics[width=0.3\linewidth]{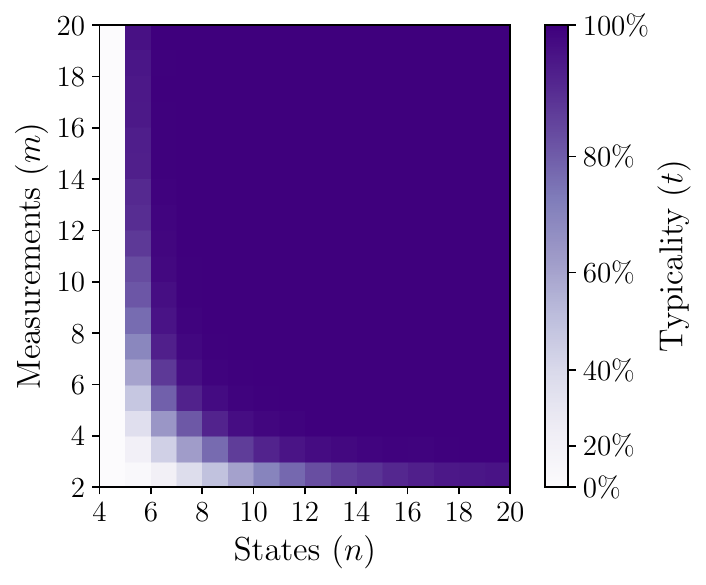} }
		b)\adjustbox{valign=t}{\includegraphics[width=0.3\linewidth]{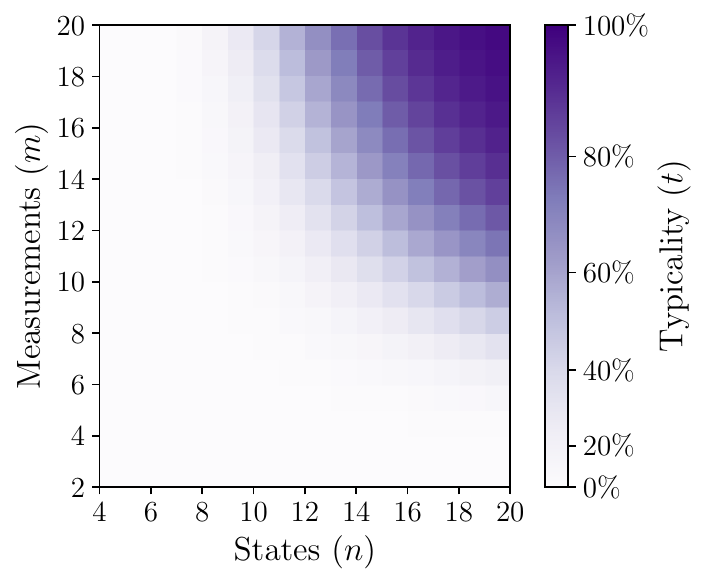}}
		\caption{Typicality of contextuality for different numbers of preparations and measurements for randomly-sampled (a) pure states and projective measurements, and (b) mixed states and POVMs.}
		\label{fig:3dplots}
	\end{figure*}
	
	\begin{nresu}
		For $(n,m,d=2;N=10^6)$, typicality of contextuality is zero when $n=4$ for all $m$, for both (i) pure states and sharp measurements and (ii) mixed states and unsharp measurements. When $n>4$, typicality of contextuality is already non-zero for pure states and sharp measurements, and quickly reaches $t_{(n,m,d;N)} \approx 100\%$ for $m\geq4$. For mixed states and POVMs, a similar behavior is observed, but $t_{(n,m,d;N)} \approx 100\%$ is approached at a slower pace (i.e., larger values of $n$ and $m$ are needed). 
	\end{nresu}
	
	In order to further explore Questions~\ref{q:1} and~\ref{q:2}, we now move on to study how typicality of contextuality assessments change with the number of states when the number of measurements is large and chosen to approximate the full set of projective effects. This would provide a numerical visualization of the result in Ref.~\cite{zhang2025reassessing}, which states that if one has access to all possible measurements for a quantum system, linear dependence among a set of states is necessary (and in the case of pure states, sufficient\footnote{For mixed states, linear dependence is not sufficient to imply contextuality. For example, an $\varepsilon$-ball of states around the maximally mixed state in the qubit space contains as many linear dependences as the unit ball but will nonetheless, for sufficiently small but non-zero $\varepsilon$, admit a noncontextual explanation in any prepare-and-measure scenario (since any contextual prepare-and-measure scenario becomes noncontextual under a finite amount of  depolarizing noise~\cite{schmid2018contextual,marvian2020,ravi2015}).}) to generate contextuality. We start the discussion with an analytical result: 
	
	\begin{restatable}{prop}{upperbound}\label{prop2}
		For any finite $(n,m,d)$ with $n$ preparations and $m$ measurements, $t_{(n,m,d)}<1$. This is holds even when sampling only pure states and effects.
	\end{restatable}
	
	A proof is given in Appendix~\ref{app:proofs}.
	
	This reiterates that assessments of the form $t_{(n,m,d;N)}\approx100\%$ throughout this work do not translate to the theoretical typicality $t_{(n,m,d)}=1$. As we see now, however, with a large but finite number of measurements one consistently gets $t_{(n,m,d;N)}\approx100\%$. Later on, in Prop.~\ref{prop3}, we will show that $\lim_{m\to\infty}t_{(n,m,d=2)}=1$ for $n>4$.
	
	\begin{figure}[t]
		\centering
		\includegraphics[width=0.65\linewidth]{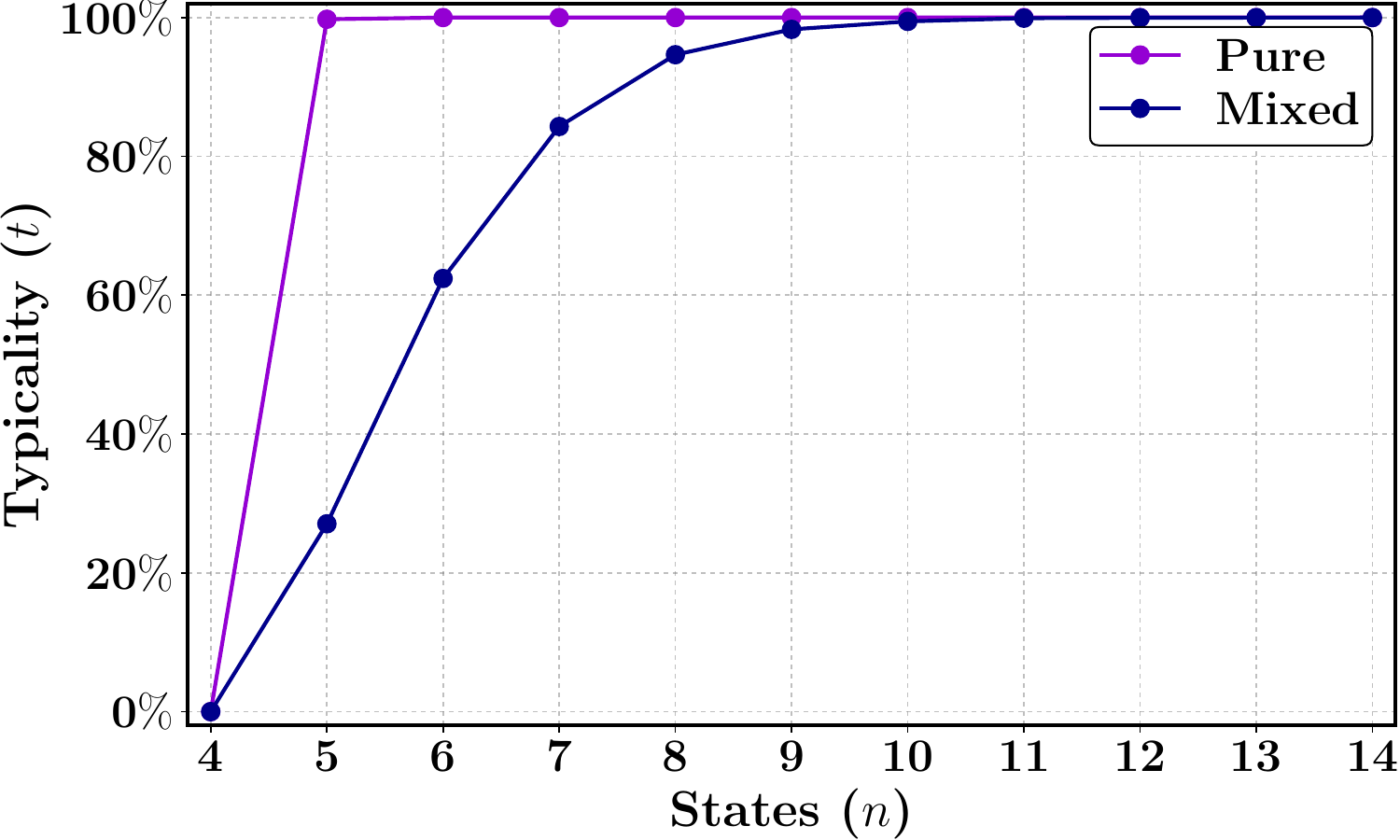}
		\caption{Typicality of contextuality for 92 dichotomic projective measurements over a qubit (that approximate the set of all projective effects), parametrized by Eq.~\ref{90effects}, and $n$ random (pure in violet, mixed in blue) states.}
		\label{fig:manyeffects}
	\end{figure}
	 
	We parametrize projectors of a qubit according to Eq.~\eqref{90effects} and test the typicality for $ (n, m=92, d=2; N=10^6)$ for both pure and mixed states, with the caveat that now the measurements remain fixed for all trials. The results are displayed in Fig.~\ref{fig:manyeffects}. As expected, for $n=4$ we observe $t_{(n,m,d;N)}=0\%$ for both pure and mixed preparations, as generating contextuality in this case is only possible in very specific setups that will never happen in our random sampling. For $n\geq5$, we have non-zero typicality of contextuality, and, in fact, for pure states, we observe $t_{(n,m,d;N)}\approx 100\%$ already for $n \geq 6$. The case with pure states and $n=5$ registers $t_{(n=5,m=92,d=2;N=10^6)}=99.76\%$, and the calculated Wilson score interval gives us that the true value of $t_{(n,m,d)}$ lies above $98.99\%$ with 99\% confidence. As mentioned before, for pure states, if one has access to all possible measurements, linear dependence is both necessary and sufficient to generate contextuality. Here, however, we do not observe a typicality of contextuality $t_{(n=5,m,d;N)} \approx 100\%$, even though five randomly sampled pure qubit states will always be linearly dependent. One could argue that this is due to the fact that, although large, the number of measurements in the experiment is finite. However, whether or not the possible associated contextuality is witnessed by the linear program in Ref.~\cite{selby2024linear} will depend on how dense the measurements are and on the numerical threshold used for the robustness.  Hence, the fact that we have a finite number of measurements is not necessarily the only cause for typicality not being approximately 100\%.
	
	When sampling from mixed states, however, one requires an increasingly large number of states to reach $t_{(n,m,d;N)} \approx 100\%$, and $n=5$ displays a rather low typicality value. A visual comparison of one sampling for the pure and mixed cases for $n=5$ is given in Fig.~\ref{fig:samples1}: there we present (i) the convex hull of the measurements as a blue polytope, which is the same for both the pure state case and the mixed state case, and (ii) the convex hull of the $n=5$ sampled states as a purple polytope, when these are sampled from our methods for pure (Fig.~\ref{fig:samples1}(a)) and mixed states (Fig.~\ref{fig:samples1}(b)). It then appears that for the mixed case, the volume of the states polytope is considerably smaller than for the pure case. Therefore, a reasonable explanation is that the polytope of mixed states can still be easily embedded into a simplex. For the mixed preparations, typicality reaches $t_{(n,m,d;N)}\geq99\%$ for $n=10$, and $t_{(n,m,d;N)} \approx 100\%$ is observed starting at $n=14$. This reflects the fact that linear dependence is not sufficient to guarantee contextuality if one's states are mixed.
	
	\begin{figure*}[htb!]
		\centering
		a)\adjustbox{valign=t}{\includegraphics[width=0.3\linewidth]{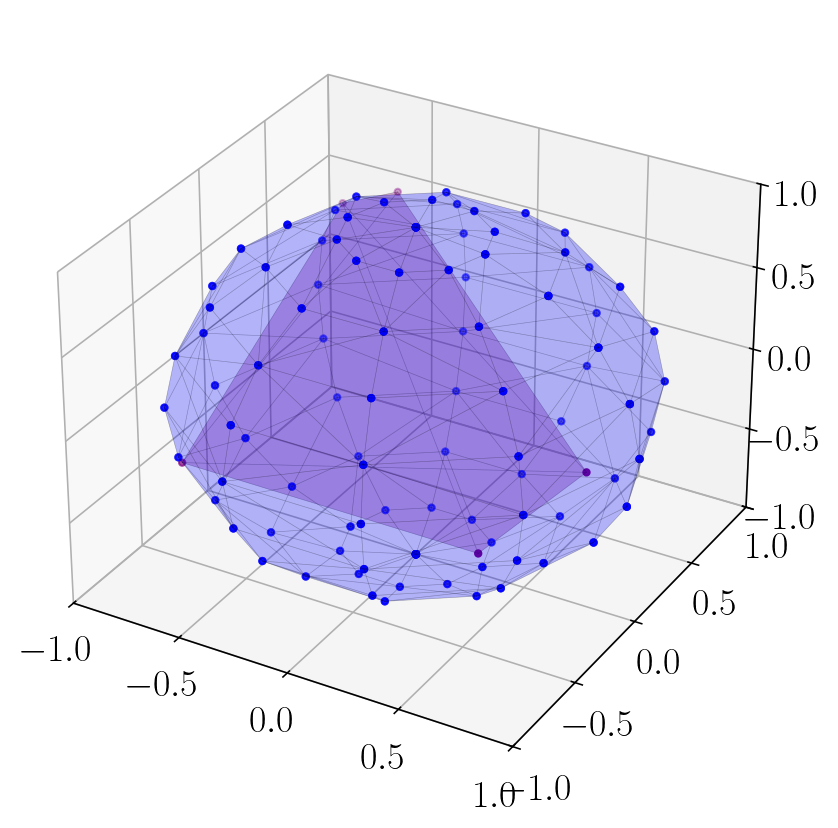}}
		b)\adjustbox{valign=t}{\includegraphics[width=0.3\linewidth]{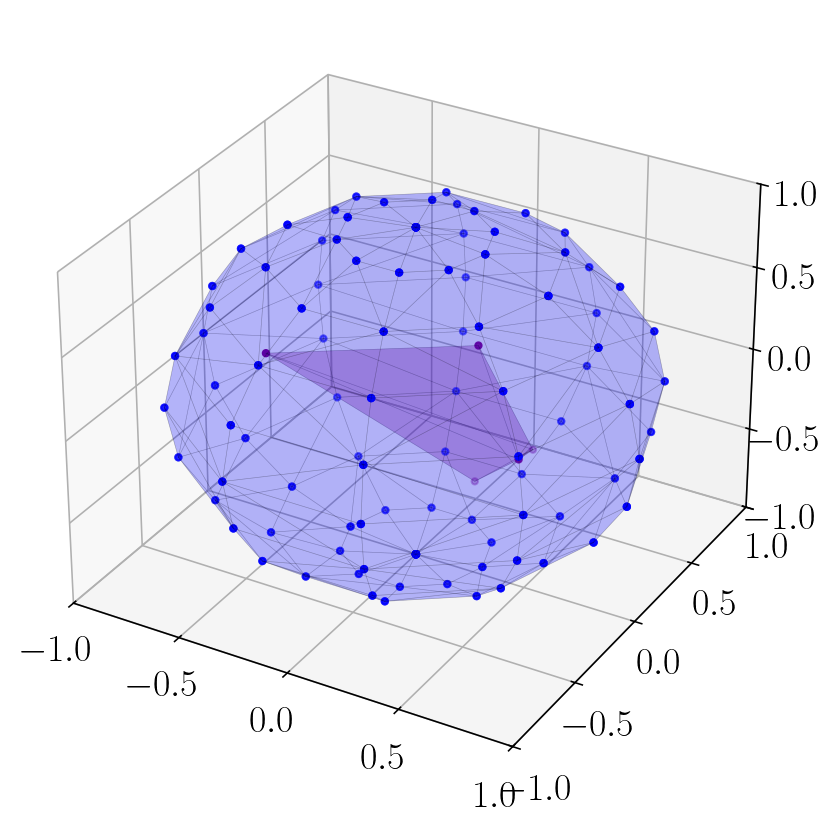}}
		\caption{Examples of $n=5$ (a) pure states and (b) mixed states and their convex hulls (violet), measured by the same 92 projective measurements provided by Eq.~\eqref{90effects} and their convex hull (blue, excluding the null and the unit effects).}
		\label{fig:samples1}
	\end{figure*} 
	
	The numerical results pertaining to a large (but fixed) number of measurements can be summarized as follows:
	
	\begin{nresu}
		
		For a large number of projective measurements on the qubit (here taken to be 92) that roughly approximate the space of all projective effects, the typicality of contextuality is non-zero for $n>4$ (i.e., the range where one could expect this to be the case). For pure states, the typicality of contextuality is quite large for $n=5$, and $t_{(n,m,d;N)} \approx 100\%$ for $n\geq6$. For mixed states, typicality of contextuality is quite large for $n\geq 10$, and $t_{(n,m,d;N)} \approx 100\%$ for $n\geq14$.
		
	\end{nresu}
	
	A useful way of interpreting these results is given by the recent work in Ref.~\cite{zhang2025reassessing,zhang2025quantifiers}. While generalized (non)contextuality is usually viewed as a property to be assessed for a set of observed data or a complete experiment, Refs.~\cite{zhang2025reassessing,zhang2025quantifiers} showed how this idea can be lifted to a notion of nonclassicality that applies even to a single component or fragment of an experiment (such as a single quantum state, a single quantum channel, a set of states, and so on). For example, a set of quantum states is classical if, when paired with all possible quantum measurements, it yields statistics that admit a generalized-noncontextual model. In the result just above, we have considered (a finite approximation of) the entire effect space of quantum theory, and so our result can be interpreted as approximately computing the typicality of nonclassicality for sets of quantum states (as opposed to of entire prepare-measure experiments). In fact, we now prove that in the asymptotic limit, the analytical typicality reaches $\lim_{m\to\infty}t_{(n,m,d)}=1$.
	
	\begin{restatable}{prop}{asymptotic}\label{prop3}
		Consider samples of $n$ pure states and $m$ pure effects of a $d$-dimensional quantum system. For any $n>d^2$ number of randomly sampled pure states, 
		\begin{equation}\label{eq:lim-effects}
			\lim_{m\to\infty}t_{(n,m,d)}=1.
		\end{equation}
		Moreover, for $m > d^2$ randomly sampled pure effects,
		\begin{equation}\label{eq:lim-states}
			\lim_{n\to\infty}t_{(n,m,d)}=1. 
		\end{equation}
	\end{restatable}
	
	The proof is given in Appendix~\ref{app:proofs}.
	
	The case investigated in this work of sampling dichotomic measurements naturally applies to this proof (in which case $m\geq d^2/2$ already ensures that the scenario is not ruled out by Lemma~\ref{lemma:TipiZeroForLI}). A useful property stemming from this result is that $t_{(n,m,d)}$ grows monotonically with $n$ and $m$. 
	
	\begin{cor}
		For any $n>d^2$ number of randomly sampled pure states and any real number $t<1$, there is a number $m\in\mathbb{N}$ of randomly sampled pure effects such that $t_{(n,m,d)}>t$. Similarly, for any $m> d^2$ randomly sampled pure effects, there is a number $n\in\mathbb{N}$ of randomly sampled pure states such that $t_{(n,m,d)}>t$.
	\end{cor}
	
	We have shown that contextuality is quite typical even in experiments with a small number of pure states and projective measurements, and that typicality quickly approaches 100\% as the number of states or measurements increases. However, real experiments never involve pure states or projective measurements, so the investigation so far is primarily of theoretical interest. In the next section, we consider the robustness of these results when the assumptions of purity and projectivity are relaxed.
	
	\subsection{The impact of purity and sharpness}
	
	Since most experiments have a fairly good understanding of how pure the prepared states are and how sharp the implemented measurements are, we now shift our attention to scenarios in which there are known upper and lower bounds on these quantities. This lets us study how the typicality of contextuality is affected by the purity of the states and projectivity of the measurements.  
	
	We explore this by analyzing scenarios with the fixed number of measurements $m=20$. From the results in the previous section, we know that for any $m \geq 12$, typicality of contextuality reaches $t_{(n,m,d;N)}>99\%$ for $n \geq 7$ pure states. Moreover, for $m \geq 13$, typicality $t_{(n,m,d;N)}> 99\%$ is already achieved for $n=6$. Our current question is: for $m=20$ measurements, how does the minimum number of states required to observe $t_{(n,m,d;N)}> 99\%$ depend on the purity of the states and the set of allowed measurements? To address this, we examine different lower bounds on purity (measured by $\Tr{\rho^2} \in \left[\frac{1}{d},1\right]$, where $\rho$ is the state of the system), thereby restricting how mixed the states can be. We study two cases: (i) restricting the purity of states while keeping the measurements projective and (ii) restricting the purity of both states and the sharpness of measurements. The results for $(n,m=20,d=2;N=10^6)$ are presented in Fig.~\ref{fig:purity} for three lower bounds on purity/sharpness: 0.5 $=\frac{1}{d}$, 0.7, and 0.9, and two upper bounds on purity/sharpness: 0.95 and 1. 
	
	\begin{figure*}[htb!]
		\centering
		\begin{tabular}{cc}
		a)\,\,\adjustbox{valign=t}{\includegraphics[width=0.35\linewidth]{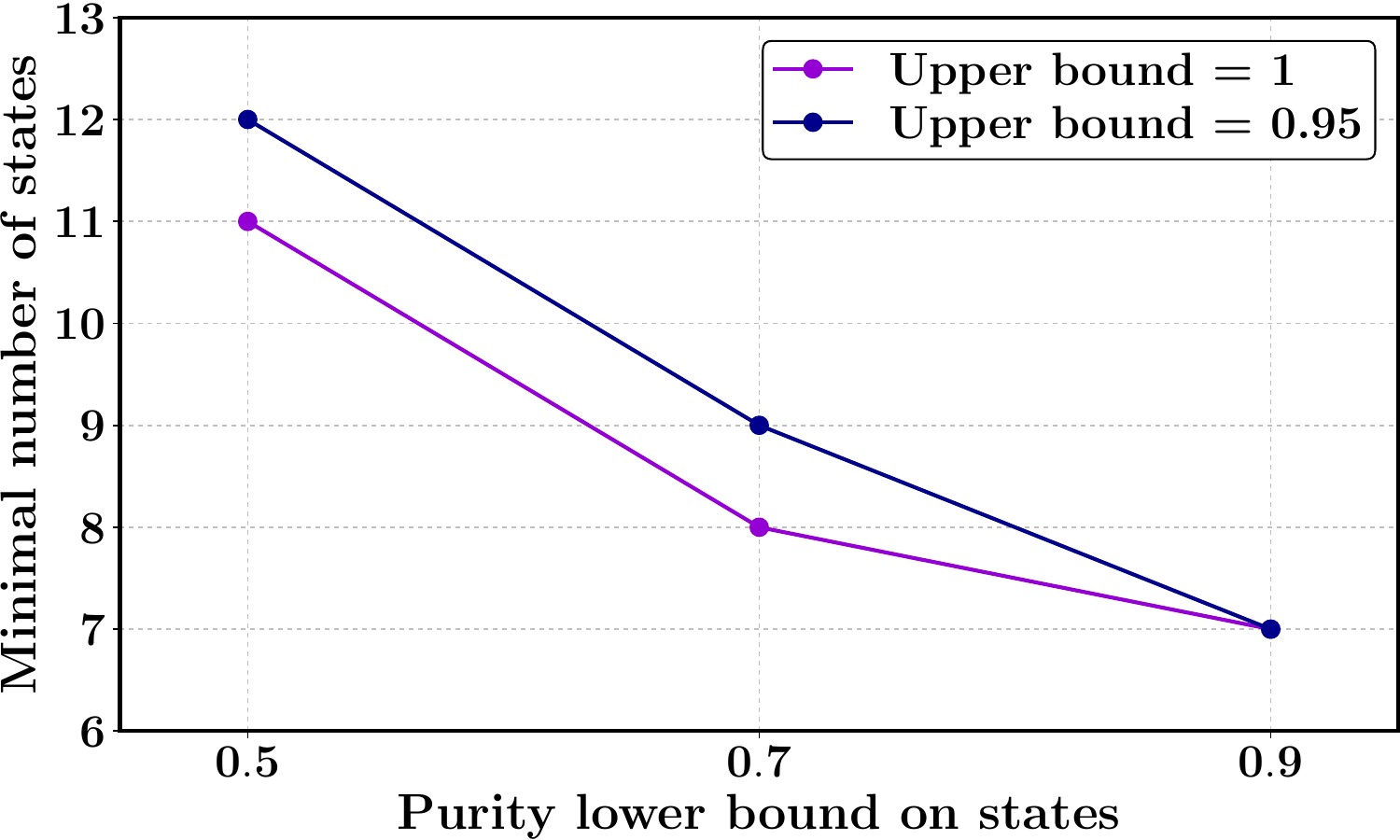} }&
		b)\,\,\adjustbox{valign=t}{\includegraphics[width=0.35\linewidth]{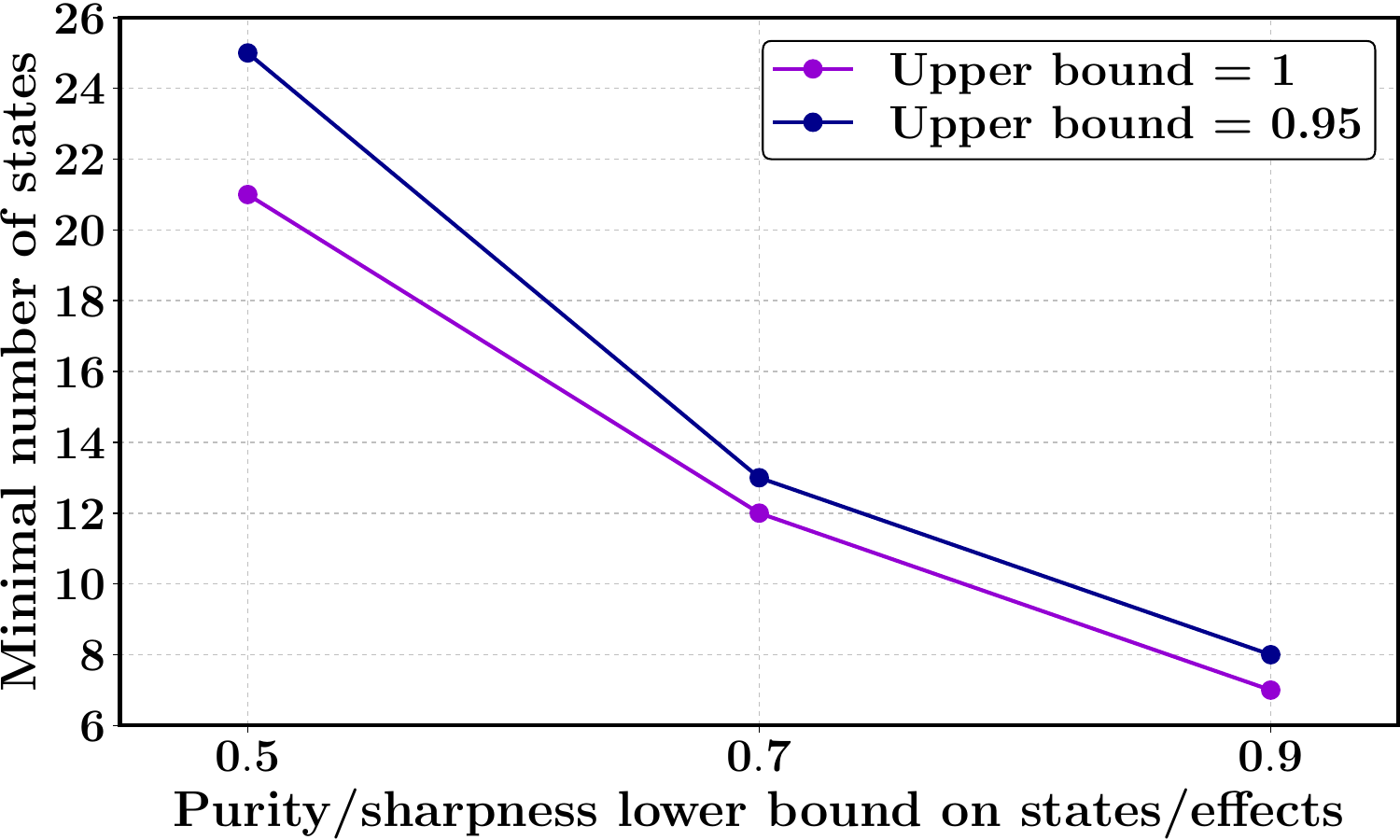}}
		\end{tabular}
		\caption{Minimal number of states $n$ needed to observe typicality $t_{(n,m=20,d=2;N=10^6)} >99\%$ for $m=20$ measurements as a function of the lower bound on purity of the states with (a) sharp measurements and (b) unsharp measurements, for two different upper bounds on their purity. In (b), the bounds on sharpness are the same as the purity of the states.} 
		\label{fig:purity}
	\end{figure*}
	
	For the case of projective measurements (Fig.~\ref{fig:purity}(a)), one requires a considerably lower number of random states (compared to the case of POVMs in Fig.~\ref{fig:purity}(b)) to generate contextuality with typicality $t_{(n,m,d;N)}> 99\%$. From Fig.~\ref{fig:purity}(a), notice that eliminating a thin slice of highly faithful states (i.e., the ones for which purity is between $0.95$ and $1$) does not jeopardize this ubiquity of nonclassicality—at least when the measurements are sharp. In fact, if the states are pure enough (purity larger than $0.9$), it does not matter whether one has access to this highly faithful slice or not. Our results thereby show that for 20 projective measurements, $n=9$ preparations already suffice for near-certain typicality when purity is at least $0.7$. If one can ensure at least $0.9$ purity in states, one requires only $n=7$ preparations. 
	
	Naturally, these numbers may change if the randomly sampled measurements are unsharp. As shown in Fig.~\ref{fig:purity}(b), the minimal number of random preparations required for $t_{(n,m,d;N)}> 99\%$ with 20 random POVMs is as high as $n=25$ for the case where no lower or upper bounds on purity are imposed. Notice that because they are a measure-zero subset of the full effect space, projective effects are never sampled in this case. It is therefore interesting to notice that the minimal number of preparations is $n=7$ for states and effects with purity between $0.9$ and $1$. This is the same number as the one observed for projective measurements and states in this slice, suggesting that, within this range of purity, contextuality is easy to generate even without access to projective measurements. This is extremely relevant for experimental implementations, where high purity states or sharp measurements may be costly or impossible to implement~\cite{Girolami2019,Guryanova2020}.
	
	These numerical results pertaining to constrained-purity scenarios (related to Question \ref{q:2} in the Introduction) can be summarized as follows:
	
	\begin{nresu}
		
		The minimal number of states $n$ that are required to generate contextuality with typicality of contextuality $t_{(n,m,d;N)}>99\%$ for the prepare-and-measure scenario $(n,m=20,d=2)$ increases as the lower bound on the purity of the $n$ states is decreased. This happens both when the measurements are projective and when the measurements have some fixed range of purity. In the latter case, $n$ is generically larger than in the former. For projective measurements, having states with purity in the range $(0.9,0.95)$ yields the same results as having states with purity in the range $(0.9,1)$. A similar behavior happens for non-projective measurements, where the results for both purity ranges are similar. This suggests that it is not necessary to experimentally prepare highly pure states or highly projective measurements in order to typically observe contextuality.  
		
	\end{nresu}

	All the values for the minimal number of states depend on the number of fixed measurements $m$, as well as the other parameters involved in our computations and discussed in detail in Appendix~\ref{app:threshold}. In order to facilitate the investigation for particular scenarios, we provide the open-source Python implementation that generated all of our plots in Ref.~\cite{github}, along with a function that can be used to obtain the minimal number of preparations to be sampled for $t_{(n,m,d;N)}>99\%$ given $m$, $d$, $N$ and upper and lower bounds for the purity of states (and possibly effects).  A summary of this repository is provided in Appendix~\ref{app:repo}. 
	
	\subsection{Comparison to typicality of Bell nonlocality}
	
	Contextuality is closely related to a type nonclassicality, called nonlocality~\cite{bell1964on,Brunner2014}. Nonlocality arises in Bell scenarios, where statistics generated by spacelike-separated measurements on systems prepared in entangled states may sometimes resist a classical explanation.  In Bell scenarios, the outcome statistics are usually referred to as \textit{correlations}.  For two parties (Alice and Bob) in a Bell scenario, the quantum correlations arise via Born's rule: $p(ab|xy)=\tr{(M_{a|x}\otimes M_{b|y})\rho_{AB}}{}$. Here, $M_{a|x}$ (resp.~$M_{b|y}$) are the POVM elements of Alice's (resp.~Bob's), and $\rho_{AB}$ is the density matrix representing the quantum state of the shared system.  Previous work has investigated whether nonlocality is a typical property of pure entangled states~\cite{Drumond2012,deRosier2017,Lipinska2018,Drumond2018,de2020strength}. The general conclusion is that as either the number of parties or the number of measurement settings increases, typicality of nonlocality exceeds $99\%$ for pure states and projective measurements. Moreover, it has been shown that for any pure bipartite entangled state, typicality tends to unity as the number of measurement settings tends to infinity~\cite{Lipinska2018}, while actual violations of Bell inequalities grow increasingly small with the local dimension of the quantum system~\cite{Drumond2012,Drumond2018}. 
	
	To compare the results of typicality of contextuality to typicality of nonlocality, we can use the fact~\cite{schmid18ineq,wright2023invertible} that a bipartite Bell scenario can be conceptualized as a remote state preparation followed by a local measurement~\cite{Li2013}, which resembles the structure of a prepare-and-measure scenario. In such a reformulation, the system that is studied is the subsystem in Bob's lab. The possible preparations of that system then correspond to the states $\rho_{a|x}=\frac{1}{p(a|x)}\tr{(M_{a|x}\otimes \id)\rho_{AB}}{A}$ that Alice steers on Bob's system and are therefore labeled by the settings and outcomes of the measurement $\{M_{a|x}\}$ . The measurements in the associated prepare-and-measure scenario are simply $\{M_{b|y}\}$. Naturally, this reformulation artificially introduces correlations among the preparations, as they are not independent. In particular, they now satisfy the no-signaling condition given by $\sum_a p(a|x)\rho_{a|x}=\rho_B$ with $\rho_B=\tr{\rho_{AB}}{A}$.  This fact has significant consequences for studying typicality, as we explain below. Hereon, we focus on typicality of contextuality for pure preparations and projective measurements, as this is the case usually considered for studying typicality of nonlocality.
	
	Consider a bipartite Bell scenario with two dichotomic measurements, which corresponds to a prepare-and-measure scenario with four states and two measurements. Recall that for $n=4$, as per Lemma~\ref{lemma:TipiZeroForLI}, typicality of contextuality is zero. However, in the Bell setting the reported typicality is $5.32\%$~\cite{de2020strength}.  This might look contradictory, but notice that this is just a consequence of the specifics of a Bell scenario: in a prepare-and-measure scenario that underpins a Bell scenario, the states $\rho_{a|x}$ cannot all be randomly sampled (like we generically do in typicality assessments here), since they must obey the no-signaling condition. 
	
	Therefore, one must bring in some caveats when trying to compare the two situations. Two options stand out as possible avenues to pursue this comparison: on the one hand, one might equip the sampling techniques in a prepare-and-measure scenario with extra constraints to effectively implement a sampling method for a Bell scenario; on the other hand, one might compare the two different scenarios and see how the peculiarities of one may open the door to observing things that the other doesn't display. In this paper we will explore the latter. 
	
	\begin{resu}
		Typicality results in the prepare-and-measure scenario cannot be quantitatively translated into typicality statements for Bell scenarios (or vice versa) if the extra linear constraints between the states are not taken into account.
	\end{resu}

	A first thing to observe is that, qualitatively, both Bell and prepare-and-measure scenarios behave similarly: typicality increases with the number of states and effects, as well as with the number of measurement settings. However, the main difference between the two scenarios is that the typicality of contextuality reaches $99\%$ significantly faster. In Ref.~\cite[Table~I]{de2020strength}, authors report typicality of two-qubit systems for a number of measurement settings ranging from 2 to 11 per party, with the maximum typicality of nonlocality being $76.80\%$ for 11 settings. Most of the corresponding prepare-and-measure scenarios (i.e., prepare-and-measure scenarios with corresponding values for $n$ and $m$, but randomly sampled states and effects), however, have typicality above $99\%$. For example, the Bell scenario with 5 measurement settings per party (with reported typicality $50.04\%$) corresponds to a prepare-and-measure scenario with $(n=10,m=5)$, which we show in Section~\ref{sec:results1} (see Fig.~\ref{fig:3dplots}(a)) to have typicality $99\%$ (for pure states and projective measurements, like in the Bell case).  
	
	\begin{nresu}
		Typicality of contextuality reaches $t_{(n,m,d;N)} \approx 100\%$ faster than the typicality of nonlocality. 
	\end{nresu}

	We therefore conclude that the qualitative behavior of typicality is similar for contextuality and nonlocality tests (with the exception of the $n=4$ case). However, witnessing nonclassicality with high typicality ($t_{(n,m,d;N)}\approx 100\%$) is easier through contextuality experiments than through nonlocality ones (at least for our sampling methods).  This suggests that, for performing nonclassical experiments with comparable resources, it is advantageous to probe contextual features rather than nonlocal ones: besides not requiring space-like separation or the preparation of bipartite entangled states, tests of contextuality can use fewer (or less stringently chosen) procedures to generate nonclassicality.

	\section{Application to parity-oblivious multiplexing} \label{sec:POM}
	
	All of our results from the previous sections show that contextuality is fairly common in qubit experiments involving random states and measurements. We now explore how this translates into typicality and strength of \emph{advantage} for a particular quantum information processing task known as parity-oblivious multiplexing (POM) task~\cite{spekkens09}. 
	
	It is known that contextuality powers the quantum advantage in POM, and it was recently shown that the robustness of contextuality estimated by the linear program employed in this paper is closely connected to the success rate of this task \cite{fonseca25}. Consider a POM task with $2^k$ qubit states and $k$ binary measurements. Let $s_{NC}$ denote the optimal noncontextual success rate for this task, and let $s$ be the maximal success rate obtained with a given strategy. It was shown in Ref.~\cite{fonseca25} that any contextual strategy, i.e., any strategy with a non-zero robustness of contextuality to depolarization $(r>0)$, provides the advantage over the optimal classical strategy in this task, with the success rate being connected to the robustness as follows: 
	
	\begin{equation}\label{eq:robustness-success}
		s=\frac{\frac12 r-s_{NC}}{r-1}.
	\end{equation}
	
	This correspondence implies that if a particular strategy typically exhibits contextuality, it also typically exhibits quantum advantage in the POM task. In this section, we examine various strategies for POM, and we assess both the typicality of finding a quantum advantage and how strong this advantage is.
	
	We focus on a POM task where one aims to encode 3 classical bits on a qubit (i.e., $k=3$). Then, the encodings that maximize the success rate are precisely the ones that optimize the volume of the set of states, i.e., the ones that form a cube on the Bloch sphere:
	\begin{equation}\label{eq:POM}
		\rho_{(x_1,x_2,x_3)}=\frac12\begin{bmatrix}
			1+\frac{(-1)^{x_3}}{2} & \sqrt{\frac32}\frac{(-1)^{x_1}-i(-1)^{x_2}}{2}\\
			\sqrt{\frac32}\frac{(-1)^{x_1}+i(-1)^{x_2}}{2} & 1-\frac{(-1)^{x_3}}{2}
		\end{bmatrix},
	\end{equation}
	as depicted in Fig.~\ref{fig:POMstates}, where $x_1,x_2,x_3\in\{0,1\}$. Notice that these states satisfy the parity-obliviousness condition in a POM game. This condition is related to what effective mixed state the system is prepared in when we consider an ensemble of states from $\{\rho_{(x_1,x_2,x_3)}\}$. For instance, one could consider the ensemble $\{\rho_{(x_1,x_2,x_3)}\}_{x_1\oplus x_2\oplus x_3=0}$ defined by the parity condition $x_1\oplus x_2\oplus x_3=0$ of the string $(x_1,x_2,x_3)$. The idea of parity-obliviousness is that, when only a parity of string $(x_1,x_2,x_3)$ is known, then we demand that the effective mixed state of the ensemble remains the same for the possible values of the parity—i.e., by knowing the parity of the string, we can't gain information on what ensemble has been prepared. 
	
	\begin{figure}[htb!]
		\centering
		\includegraphics[width=\linewidth]{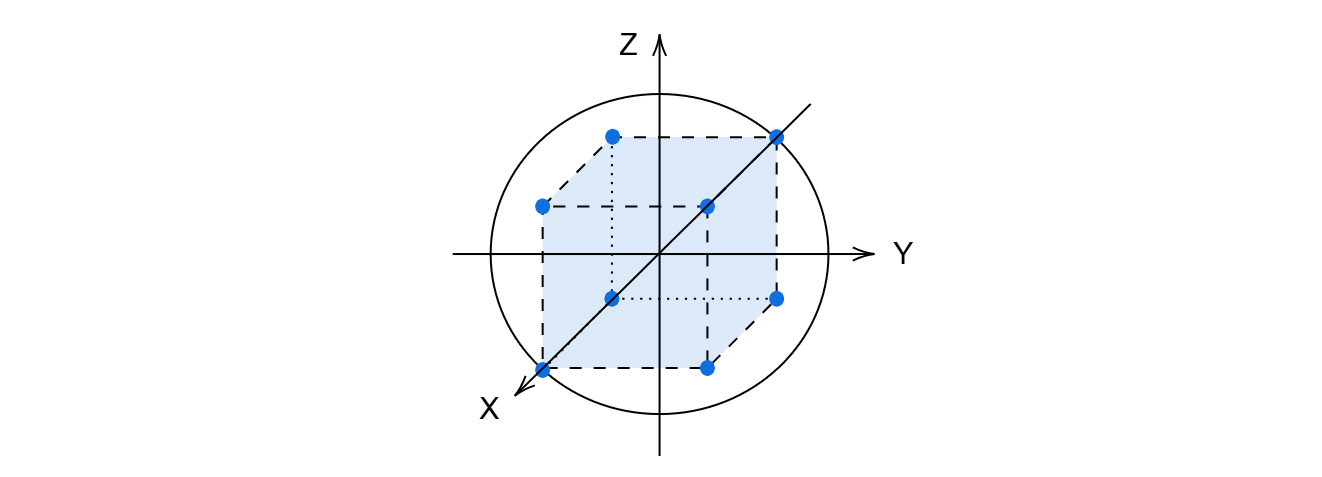}
		\caption{Optimal states for the 3-to-1 parity-oblivious multiplexing task. The optimal measurements for the task lie on the $X$, $Y$, and $Z$ axes.}
		\label{fig:POMstates}
	\end{figure}
	
	The measurements that maximize the success rate for decoding the message are those in the $\hat{X}$, $\hat{Y}$, and $\hat{Z}$ Pauli bases. It is known that the optimal success rate for a noncontextual implementation of this task reaches $s_{NC} = \frac12\left(1+\frac{1}{k}\right)\approx0.667$ for $k=3$, while the optimal quantum implementation we just described yields a success rate $s_Q =\frac12\left(1+\frac{1}{\sqrt{k}}\right)\approx 0.788$, yielding an advantage $s_Q/s_{NC}-1$ of about 18.3\%. From Eq.~\eqref{eq:robustness-success}, one can calculate that the optimal quantum advantage corresponds to the robustness $r = 0.42$.
	
	With this information in mind, let us start our numerical study. We consider three different cases, wherein in all cases we leave the states fixed as the optimal states described above, but wherein we have three different methods for sampling measurements (discussed below). First, we check the typicality of contextuality for each of the three cases. Then, we compute the average success rate ($\bar{s}=\frac{1}{N}\sum_is_i$), the average advantage ($\bar{s}/s_{NC}-1$), and the average robustness of contextuality ($\bar{r} = \frac{1}{N}\sum_ir_i$) for each of the considered scenarios. 
	We denote the standard deviation of the average advantage and the standard deviation of the average robustness of contextuality as $\sigma_s$ and $\sigma_r$, respectively.

	\
	
	\noindent \textbf{Case i)} 
	Instead of performing the optimal measurements, the decoding party performs three completely random projective measurements. This could happen if the decoding party has no control over which measurements are implemented. With this strategy, one has a $98.6\%$ chance of obtaining contextuality, which implies that this strategy has a $98.6\%$ chance of obtaining a quantum advantage. We can then use Eq.~\eqref{eq:robustness-success} to quantify the strength of the advantage in each case, and from this compute the average advantage.

	\begin{nresu} 
		Consider the prepare-and-measure scenario $(n=8,m=3,d=2;N=10^6)$ with $m=3$ random projective measurements, but taking the $n=8$ states to be those optimal for POM. The typicality of contextuality for this scenario is $t_{(n,m,d;N)} \approx 98.6\%$. The average success rate in the POM task is $\bar{s}=0.72$ 
		($\sigma_s=0.02$)
		, which is about $8\%$ above the optimal noncontextual strategy. The average robustness of contextuality is $\bar{r} = 0.22$ ($\sigma_r=0.07$).
	\end{nresu}
	
	This result demonstrates two features of the quantum advantage in the POM task. First, high typicality of contextuality does not imply a strong quantum advantage. This is unsurprising, as typicality of contextuality means that the average success rate is typically above the noncontextual one but does not determine the strength of this advantage. Second, it is interesting to note that although the improvement over the optimal noncontextual strategy is much smaller than in the optimal strategy case (which yields an $18.3\%$ improvement over the noncontextual strategy), one still finds a significant advantage on average, even when the decoding party performs random projective measurements.
	
	\
	
	\noindent \textbf{Case ii)} 
	The decoding party does not implement random projective measurements but rather random POVMs. This may happen when the decoding party also lacks any control over the sharpness of the measurements or when the channel between the encoding and the decoding parties is subject to noise. The typicality 
	of contextuality in this case is equal to $15.2\%$, which is significantly lower than for case i) with random projective measurements. The results of the analysis of the average success rate are summarized below.
	
	\begin{nresu} 
		Consider the prepare-and-measure scenario $(n=8,m=3,d=2;N=10^6)$, with $m=3$ randomly sampled dichotomic POVMs but the $n=8$ fixed states used in the optimal strategy for POM. The typicality of contextuality for this scenario is $t_{(n,m,d;N)} \approx 15.2\%$. The average success rate in the POM task is $\bar{s}=0.669$ ($\sigma_s=0.006$), which is about $0.3\%$ above the optimal noncontextual strategy. The average robustness of contextuality is $\bar{r} = 0.01$ ($\sigma_r=0.03$).
	\end{nresu}
	
	\
	
	\begin{table*}[htb]
		\centering
		\begin{tabular}{c|c|c|c}\hline
			Strategy & Typicality ($t$) & Average robustness ($\bar{r}$) & Average advantage ($\bar{s}/s_{NC}-1$)  \\\hline
			Optimal & $-$ & $0.42$ & $18.3\%$  \\
			Lack of RF &  $\approx$ 100\% & 0.3  & 11\% \\
			Random PVMs & 98.6\% & 0.22 & 8\% \\
			Random POVMs & 15.2\% & 0.01 & 0.3\% \\\hline
		\end{tabular}
		\caption{Typicality of contextuality, average advantage over the optimal classical strategy, and average robustness of contextuality to depolarizing noise of 3-to-1 parity-oblivious multiplexing tasks in a qubit with optimal encoding strategy but suboptimal measurements. The first row includes the data in the optimal quantum scenario, for reference. Average advantages had a variance of about $\sigma_s=0.02$ for the scenarios with rotated optimal measurements and with random projective measurements, while $\sigma_s=0.006$ for the scenario with random POVMs. For the average robustness, the variance was of about $\sigma_r=0.07$ for the former cases, and $\sigma_r=0.03$ for the latter.}
		\label{tab:POM}
	\end{table*}
	
	\noindent \textbf{Case iii)} There is an arguably more realistic situation that we could also consider, where the obstacle for the encoding and decoding parties to succeed at POM is not due to their local experimental limitations, but rather due to a lack of common reference frame (RF) between them. In this case, we take the preparations and the measurements to be the optimal pure states and projective measurements spanning a set of mutually unbiased bases. The lack of a common reference frame is then captured by rotating these optimal measurements with respect to the same (fixed but arbitrary) axis and angle (to implement the rotations, we sample Haar-random $SU(2)$ unitaries; see Appendix~\ref{app:qutip} for details). The typicality of contextuality is almost $100\%$ in this case: 
	
	\begin{nresu}
		Consider a POM task where the encoder prepares the optimal states and the decoder performs the optimal measurements, but where they lack a common reference frame. The typicality of contextuality in this $(n=8,m=3,d=2;N=10^6)$ prepare-and-measure scenario (for randomly chosen reference-frame misalignments) is $t_{(n,m,d;N)}\approx 100\%$.  The average success rate is $\bar{s}=0.74$ ($\sigma_s=0.02$), with an average robustness of $\bar{r} = 0.3$ ($\sigma_r=0.07$). The average success rate is $11\%$ above the classical optimal rate.
	\end{nresu}
	
	This result again shows that large typicality of contextuality does not imply large quantum advantage. Nonetheless, this is a surprisingly high advantage given the setup, implying that as long as the decoding agent can ensure they are locally performing measurements with the features of the optimal implementation (the angles between the measurements and the sharpness), the lack of a shared reference frame will not jeopardize the quantum advantage considerably. 
	
	All the numerical results presented in this section are summarized in Table~\ref{tab:POM}.

	\section{Conclusions}
	
	In this work, we begin to investigate the question of how typical contextuality is in prepare-and-measure scenarios. In this setting, we find that contextuality is quite typical even for relatively small numbers of states and measurements, and we show how these numbers must increase as noise increases.  This novel investigation is only possible thanks to efficient numerical tools introduced in recent years~\cite{selby2024linear}. Our study, moreover, goes beyond what has been done for studying typicality of nonlocality in that we probe scenarios with mixed states and unsharp measurements. 
	
	Previous studies of typicality of nonclassicality have focused primarily on Bell nonlocality. By noting that a Bell scenario can be viewed as remote state preparation followed by a measurement, we were able to compare our results, with the caveat that the no-signaling constraint introduces nontrivial linear dependencies. In nearly all cases, the typicality of contextuality reaches significantly higher values than that of Bell nonlocality, indicating that generating nonclassicality is much more common in contextuality scenarios. Moreover, we observe that the typicality of contextuality can approximate its maximal value for a smaller number of random measurements than the one needed in Bell scenarios. In this sense, contextuality is a more experimentally accessible form of nonclassicality than Bell nonlocality.
	
	The fact that contextuality appears with high typicality in simple random experiments could lead to the mistaken idea that meaningful quantum advantages are easily attained even in careless implementations. We illustrated why this is not the case by analyzing one particularly important instance of a contextuality-powered task, demonstrating that randomly chosen projective measurements in a parity-oblivious multiplexing experiment display fairly high typicality of contextuality, even when the average success rate is low compared to the optimal case. This shows indeed that high typicality of contextuality does not translate into typically high quantum advantage. By adding additional constraints to how we sample the measurements in POM tasks, we can consider how different obstacles to experimental implementation may affect the ability to obtain nonclassical experiments as well as how these affect the ability to obtain advantage. We conclude that reference frame misalignment between the encoding and decoding parties still permits an advantage in this task, although the advantage is significantly reduced compared to the optimal strategy. 
	
	Our results also provide insight into how one can generate contextuality in experimental setups. Although pure states and sharp measurements are never truly achievable in an experiment, this lack of purity can easily be compensated by increasing the number of sampled states and performed measurements. We provide a visualization of this trade-off, along with a numerical tool that predicts the minimal number of random preparations required for typicality of contextuality of at least $99\%$ given:  (i) the number of random binary measurements, (ii) the experimental upper and lower bounds on the purity of the states and/or POVMs that can be produced, and (iii) the number of repetitions $N$ that this experiment is performed~\cite{github}.
	
	Our results also suggest a useful strategy for engineering contextuality: one should aim to generate as many operational equivalences as possible; for example, ensuring that an experiment spans only a subspace of the state/effect space can significantly enhance the odds of observing contextuality (so, for instance, instead of trying to spread out the states across the full 3-dimensional Hilbert space of a qutrit, one should try to restrict the sampled states to a qubit subspace in order to witness contextuality more easily). The repository we provide serves as a versatile toolbox for future theoretical and experimental research. It can guide analytical investigations into the minimal number of measurements required to generate contextuality for a given set of preparations and can be extended to explore the typicality of contextuality in higher-dimensional systems, though this requires greater computational power. For laboratory tests of contextuality, the toolbox allows users to incorporate laboratory-specific purities, enabling studies analogous to ours to be tailored to particular setups and directly compared with experimental data. Moreover, by restricting how procedures are sampled (as we did for MUB measurements in POM tasks), one can see how different designs or experimental constraints affect the typicality of witnessing nonclassicality.
	
	Some new avenues open up with this kind of investigation. For instance, we discussed briefly in Section~\ref{sec:results1} how our methods can be applied to the study of nonclassicality of individual processes, as initiated in Refs.~\cite{zhang2025reassessing,zhang2025quantifiers}. Considerably more work could be done to study the typicality of nonclassicality of single processes by continuing in this direction. Moreover, it would be interesting to see how typicality of contextuality behaves in more general frameworks (such as the one of generalized probabilistic theories) and how these numerical tools may be employed to investigate novel notions in the contextuality program~\cite{shadows}.
	
	\section*{Acknowledgments}
	
	We thank Pawe\l~Cie\'sli\'nski for the discussions on typicality of Bell nonlocality and Piotr Mironowicz for the support with the cluster. V.P.R., B.Z., and R.D.B acknowledge support by the Digital Horizon Europe project FoQaCiA, Foundations of quantum computational advantage, GA No.~101070558, funded by the European Union, NSERC (Canada), and UKRI (UK). 
	D.S. and R.D.B. were supported by Perimeter Institute for Theoretical Physics. Research at Perimeter Institute is supported in part by the Government of Canada through the Department of Innovation, Science and Economic Development and by the Province of Ontario through the Ministry of Colleges and Universities. 
	JHS was funded by the European Commission by the QuantERA project ResourceQ under the grant agreement UMO2023/05/Y/ST2/00143.
	This work is partially carried out under IRA Programme, project no.~FENG.02.01-IP.05-0006/23, financed by the FENG program 2021-2027, Priority FENG.02, Measure FENG.02.01., with the support of the FNP. We acknowledge the use of a computational server financed by the Foundation for Polish Science (IRAP project, ICTQT, contract no. 2018/MAB/5/AS-1, co-financed by EU within Smart Growth Operational Programme). Some figures were prepared using Mathcha.

\bibliographystyle{apsrev4-2}
\bibliography{main}
	\appendix

	\section{Simplex embedding linear program}\label{app:LP}
	
	The linear program provided in Ref.~\cite{selby2024linear} relies on the equivalence between the existence of a noncontextual ontological model for an operational scenario and the existence of a simplex embedding for its associated GPT fragment~\cite{schmid2021characterisation,Schmid2024structuretheorem}. In this section we summarize the results of Ref.~\cite{selby2024linear} and comment on how they are implemented in Ref.~\cite{cavalcanti-git}. We assume some familiarity of the reader with the basics of generalized probabilistic theories and refer to Ref.~\cite{janotta2014generalized} if additional background is needed. \\
	
	Given a set of all possible preparations $\mathcal{P}$ and measurements $\mathcal{M}$ comprising a prepare-and-measure scenario, it is possible to construct the associated GPT $(\Omega,\mathcal{E},V)$, in which $V$ is a real vector space and $\Omega$ a convex set containing the representation of the states in this space, in such a way that the operational equivalences between preparations are captured by convex combinations of vectors in $\Omega$. The same applies to $\mathcal{E}$: it is a convex set in the dual $V^*$, such that operational equivalences between measurement outcomes are captured by convex combinations in $\mathcal{E}$ yielding the same vectors. These sets are expected to satisfy some properties, such as containing normalized counterparts for any state, complementary counterparts for any effects, a unitary effect, and the feature that $\Omega$ and $\mathcal{E}$ are separating for each other. The statistics of the operational scenario are recovered by acting effects $e\in\mathcal{E}$ on states $s\in\Omega$, such that
	\begin{equation}
		p(k|M,P)= e_{k|M}[s_P].
	\end{equation}
	
	Notice that a prepare-and-measure scenario often does not constitute a full theory: preparations or measurement outcomes might be present without their normalized counterparts, and almost always the sets $\Omega$ and $\mathcal{E}$ won't be tomographically-complete for each other. We therefore refer to $(\Omega,\mathcal{E},V)$ as a \emph{GPT fragment} rather than a GPT. A natural implication is that the sets $\Omega$ and $\mathcal{E}$ do not span the full spaces $V$ and $V^*$, but rather subsets $S_\Omega:=\mathsf{Span}(\Omega)$ and $S_\mathcal{E}:=\mathsf{Span}(\mathcal{E})$. Most often, $S_\mathcal{E}\neq S_\Omega^*$. One could therefore naturally describe $\Omega$ and $\mathcal{E}$ in their respective subspaces, i.e., define projectors $P_\Omega:V\to S_\Omega$ and $P_\mathcal{E}:V^*\to S_\mathcal{E}$ such that $\Omega_\mathcal{A}:=P_\Omega(\Omega)$ and $\mathcal{E}_\mathcal{A}:=P_\mathcal{E}(\mathcal{E})$ are the new sets of preparations and measurements. Evidently, the effects in $\mathcal{E}_\mathcal{A}$ cannot directly act on the states in $\Omega_\mathcal{A}$, so to recover probabilities we must undo these projections. We therefore define inclusion maps $I_\Omega:S_\Omega\to V$ and $I_\mathcal{E}:S_\mathcal{E}\to V^*$, such that
	\begin{equation}
		p(k|M,P)= I_\mathcal{E}[e_{k|M}^{\mathcal{A}}](I_\Omega[s_P^\mathcal{A}]).
	\end{equation}
	
	The tuple $(\Omega_\mathcal{A},\mathcal{E}_\mathcal{A},I_\Omega,I_\mathcal{E})$ carries the exact same information as $(\Omega,\mathcal{E},V)$, and we refer to it as the \emph{accessible GPT fragment}. Working with accessible fragments is computationally convenient since one now works with sets of states and effects in smaller spaces than the full GPT. 
	
	Simplex embeddability is a property of the positive cones of states and effects, not of the sets themselves~\cite{selby23fragments}. In order to characterize these cones, one can find the set of all inequalities a vector must satisfy to live inside the corresponding cone\footnote{For polytopic GPT fragments, this set is always finite.}, that is, the set of vectors $\{h_\Omega^i\}_{i=1}^n$ in $V^*$ such that
	\begin{equation}
		h_i^\Omega(v)\geq0,\,\,\forall i\iff v\in\mathsf{Cone}(\Omega_\mathcal{A}),
	\end{equation}
	and respectively the set of vectors $\{h_\mathcal{E}^j\}_{j=1}^m$ in $V$ such that
	\begin{equation}
		w(h_\mathcal{E}^j)\geq0,\,\,\forall j\iff w\in\mathsf{Cone}(\mathcal{E}).
	\end{equation}
	Algebraically, one can concatenate these inequalities into matrices such that $H_\Omega$ contains $h_\Omega^i$ as rows. One can then express
	\begin{equation}
		H_\Omega(v)\geq_e0\iff v\in\mathsf{Cone}(\Omega_\mathcal{A}).
	\end{equation}
	Similarly, one can define the matrix $H_\mathcal{E}$ containing all the inequalities for the cone of $\mathcal{E}_\mathcal{A}$ and express an analogue equation to characterize membership in $\mathsf{Cone}(\mathcal{E}_\mathcal{A})$.  Ref.~\cite{selby2024linear} proves that assessing the simplex embeddability for the accessible fragment $(\Omega_\mathcal{A},\mathcal{E}_\mathcal{A},I_\Omega,I_\mathcal{E})$ is an instance of a linear program. Given $(H_\Omega,H_\mathcal{E},I_\Omega,I_\mathcal{E},D)$, where $D$ is a complete depolarizing channel, one can assess simplex embeddability by implementing the following:
	\begin{align}
		\text{minimize}\qquad & r\\
		\text{such that} \qquad & (1-r)I_\mathcal{E}^T\cdot I_\Omega+rI_\mathcal{E}^T\cdot D\cdot I_\Omega=H_\mathcal{E}^T\cdot\sigma\cdot H_\Omega,\\
		& \sigma\geq_e0\text{ is an }m\times n\text{ matrix}.
	\end{align}
	As mentioned in the main text, $r$ is a qualitative certifier of contextuality, since $r>0$ implies the impossibility of a noncontextual ontological model for the original prepare-and-measure scenario.\\
	
	\begin{table*}[htb!]
		\begin{tabular}{c|c|c|c}\hline
			Step & Input & Output & Execution \\\hline
			0 & $(\mathcal{P},\mathcal{M})$ & $(\Omega,\mathcal{E},\mu,u)$ & \parbox[t]{0.6\linewidth}{Construct the GPT fragment by decomposing states and effects into their Gell-Mann coefficients. Additionally, provide the maximally mixed state $\mu$ and the unit effect $u$. This step can be skipped if the input is already a GPT fragment.}\\\hline
			1 & $(\Omega,\mathcal{E})$ & $(\Omega_\mathcal{A},\mathcal{E}_\mathcal{A},I_\Omega,I_\mathcal{E})$ & \parbox[t]{0.6\linewidth}{Assess the subspaces spanned by $\Omega$ and $\mathcal{E}$ via reduced row echelon form. Construct the projectors $(P_\Omega,P_\mathcal{E})$ and the inclusions $(I_\Omega,I_\mathcal{E})$ as their pseudoinverses.}\\\hline
			2 & $(\Omega_\mathcal{A},\mathcal{E}_\mathcal{A})$ & $(H_\Omega,H_\mathcal{E})$ & \parbox[t]{0.6\linewidth}{Characterize cone facets via Motzkin's double description method using \texttt{cdd}.}\\\hline
			3 & $(H_\Omega,H_\mathcal{E},I_\Omega,I_\mathcal{E},\mu,u)$ & $r$ & \parbox[t]{0.6\linewidth}{Construct the depolarizing channel $D$ from $\mu$ and $u$. Perform the linear program from Ref.~\cite{selby2024linear} with \texttt{cvxpy}.}\\\hline
			
		\end{tabular}
		\caption{Summary of the steps and numerical techniques employed in implementation~\cite{cavalcanti-git}.}
		\label{table:LP}
	\end{table*}
	
	The implementations available at Refs.~\cite{elie-git,cavalcanti-git} are both equipped to perform every step of the above optimization program. Since we employ the implementation in Ref.~\cite{cavalcanti-git} for this project, we here focus on its technical details, which are summarized in Table~\ref{table:LP}. Notice that the original code also outputs the ontological model for the partially depolarized scenario, i.e., the set of ontic states $\{\mu_P(\lambda)\}_P$ and response functions $\{\xi_{k|M}(\lambda)\}_{k,M}$ that provide the noncontextual explanation for the partially depolarized scenario. Since in this work we are only interested in the certification of contextuality (i.e., the value of $r$), we skip this step entirely.
	
	There are multiple parameters that can impact the assessment of contextuality provided by this program. Any library for polytope conversion, such as \texttt{cdd}, is sensitive to small fluctuations in the entries of the analyzed matrices, as well as large matrices with multiple repeated rows or columns~\cite{fukuda1996, cddman}. Moreover, optimization problems are also sensitive to the convergence criteria of the employed library and the multiple solvers available within the same library. Some of these aspects are investigated and addressed in Appendix~\ref{app:threshold}.
	
	\section{Sampling states, effects, and unitaries}\label{app:qutip}
	
	To generate random states and effects, we rely on standard methods from quantum information theory. In this appendix we explain the basic mathematical constructions that underlie the implementations in the QuTiP toolbox~\cite{lambert2024qutip5quantumtoolbox}.
	
	\subsection{Random pure states and unitaries}
	Let $|\psi\rangle \in \mathbb{C}^d$ be a pure state in a Hilbert space of dimension $d$. The uniform distribution of pure states is defined by the Haar measure on the unitary group $U(d)$, which is the unique probability measure that is invariant under both left and right multiplication by unitaries. A random Haar-distributed state can be generated as
	\begin{equation}
		|\psi\rangle = U |0\rangle,
	\end{equation}
	where $U \in U(d)$ is Haar-distributed and $|0\rangle$ is the computational basis state. This ensures that the ensemble of states $|\psi\rangle$ is invariant under arbitrary unitary transformations and thus samples uniformly over the unit sphere $S^{2d-1}$ in Hilbert space.
	
	To generate a Haar-random unitary matrix $U \in U(d)$, first construct a $d \times d$ complex matrix $Z$ with independent standard normal entries. Next, perform a QR decomposition $Z = Q R$, and rescale the diagonal of $R$ to have unit modulus by defining $\Lambda = \mathrm{diag}(R)/|\mathrm{diag}(R)|$. The Haar-random unitary is then given by $U = Q \Lambda$ ~\cite{mezzadri06}.
	
	For the single-qubit case $d=2$, generating a random unitary simplifies, since any unitary can be parametrized by three Euler angles as
	\begin{equation}
		U(\alpha,\beta,\gamma) = U_z(\alpha)\, U_y(\beta)\, U_z(\gamma),
	\end{equation}
	with $U_i(\theta) = e^{-i \theta \sigma_i/2}$ and $\sigma_i$ being Pauli matrices. The corresponding Haar measure is
	\begin{equation}
		d\mu(U) = \sin(\beta)\, d\alpha\, d\beta\, d\gamma,
	\end{equation}
	which specifies how the angles must be sampled with the correct weight in order to obtain Haar-random unitaries~\cite{Zyczkowski1994}. Projective measurements for a system of dimension $d$ can thus be sampled by uniformly sampling unitaries in $U(d)$ and applying them to a orthonormal basis in this Hilbert space. For the qubit, this is equivalent to sampling a single random pure state $\ket{\psi}$ and constructing the measurement $\{\ket{\psi}\!\!\bra{\psi},\mathbbm{1}-\ket{\psi}\!\!\bra{\psi}\}$.
	
	\subsection{Random mixed states and POVMs}
	
	A mixed state in a $d$-dimensional Hilbert space is described by a density operator $\rho \in \mathcal{D}(\mathbb{C}^d)$. To generate random mixed states, one must choose a probability measure over all $d \times d$ density operators, and since no unique choice exists, different ensembles of random density matrices are studied. Here, we use the method of Ref.~\cite{Bruzda2009}. We first construct $G$, a $d \times d$ matrix belonging to the Ginibre ensemble~\cite{Ginibre1965}, i.e., a matrix whose entries are independent complex Gaussian variables. By forming
	\begin{equation}
		\rho = \frac{G G^\dagger}{\mathrm{Tr}(G G^\dagger)},
	\end{equation}
	one obtains a valid full-rank density matrix. Matrices generated in this way form the Hilbert-Schmidt ensemble of random density operators. In our work we restrict to full-rank states, since lower-rank density matrices form a set of measure zero~\cite{Bengtsson2017}.
	
	In order to sample mixed states constrained within a range of purities, we employ a rejection method. That is, given an upper bound $u$ and a lower bound $l$ to the purity of the mixed state, we sample a random full-rank density matrix $\rho$ employing the previously described method and then check whether $l\leq\Tr{\rho^2}\leq u$. If this doesn't hold, we simply sample another $\rho$ until the condition is met. Although this is not a uniform distribution over the set of all possible density operators with suitable purity, there is no such Haar measure established for this kind of set. 
	
	A different sampling method for states within some purity range could consist of a Markov Chain Monte Carlo algorithm constrained to fixed purities~\cite{MCMC}. Such a method might be more efficient if the interval $[l,u]$ is quite narrow, but it will still not ensure a uniform distribution within the specific set.  Since we do not consider narrow sets in our computations, we regard the rejection method as sufficient.
	
	In order to sample dichotomic POVMs, we employ the method defined in Ref.~\cite{kukulski2021}. That is, we sample pairs $(G_1,G_2)$ of $d\times d$ complex matrices belonging to the Ginibre ensamble, and then proceed to construct the positive semidefinite elements
	\begin{equation}
		E_1=G_1G_1^\dagger,\quad E_2=G_2G_2^\dagger,\quad S=E_1+E_2.
	\end{equation}
	The POVM is therefore given by the normalized elements
	\begin{equation}
		M=\{M_1:=S^{-1/2}E_1S^{-1/2},\ M_2:=S^{-1/2}E_2S^{-1/2}\},
	\end{equation}
	which ensures that the elements sum up to the identity. As demonstrated in Ref.~\cite{kukulski2021}, this measure can be generalised to quantum channels and is equivalent to other commonly considered measures. The implementation we provide in Ref.~\cite{github} also admits sampling POVMs with a larger number of outcomes.
	
	To sample POVMs with bounded sharpness, we employ the same rejection method as the one employed for density operators, with the caveat that for POVMs the quantity to be verified is given by the sharpness $\eta$~\cite{Salehi2021}, i.e., for a POVM $M:=\{M_i\}_{i=1}^k$ in a $d$-dimensional quantum system, \begin{equation}\label{eq:sharpness}
		\eta_M:=\frac1d\sum_{i=1}^k\Tr{M_i^2}.
	\end{equation}
	
	\section{Technical discussion}\label{app:threshold}
	
	The repository from Ref.~\cite{github} provides the codes employed to derive our results and relies on the linear program introduced in Ref.~\cite{selby2024linear}. Therefore, the user should make sure that the repository provided in Ref.~\cite{cavalcanti-git} is up and running before using the functions provided in Ref.~\cite{github}. \\
	
	In order to start the typicality analysis, we perform a sanity check in which 4 preparations and 2 binary measurements are randomly sampled. This corresponds to the simplest scenario in which contextuality arises \cite{spekkens09,pusey2018,catani2024}, and contextuality in this scenario can only be generated when the states and effects in the GPT fragment associated with it lie on the same hyperplane. Since this is a measure-zero subspace (as shown in Lemma~\ref{lemma:TipiZeroForLI}), our sampling should never see this for a finite number of tries $N$, and therefore the scenario $(n=4,m=2,d,N)$ should have typicality $t_{(n,m,d;N)}=0\%$ for any values of $d$ and $N$. 
	
	In practice, however, there will be scenarios in which the linear program will either obtain a suboptimal result or incorrectly converge to a value of $r$ different than 0 due to numerical fluctuations. Based on the fact that we know these cases should always yield $r=0$, we can impose a threshold on what the code should consider classical for purposes of typicality assessments. We analyze free and open-source solvers (ECOS, SCS, and CLARABEL) available through \texttt{cvxpy}, so if the user wishes to employ an alternative solver (likely a commercial one), it will be relevant to perform this analysis once again and adjust this threshold manually, or alternatively find a threshold for $r$ that can be experimentally justified.
	
	First, we analyze how the number $N$ of iterations impacts the runtime for this scenario with 4 mixed states and 2 POVMs. This analysis is crucial for practical purposes, since too large of a runtime for this simplest case will likely get even longer for the cases with multiple preparations and measurements explored in this work. We therefore plot how long the typicality assessment takes for a wide range of iteration numbers $N$ for each of the solvers. The plot can be found in Fig.~\ref{fig:runtime}. Our computations were run on an HPE Superdome Flex with 384 Intel\textsuperscript{\tiny\textregistered} Xeon Gold 6252 cores at 2.10 GHz and 12 TB of RAM, running RHEL 9.1. The typicality assessment was parallelized across 200 processes. 
	
	\begin{figure}[h]
		\centering
		\includegraphics[width=0.65\linewidth]{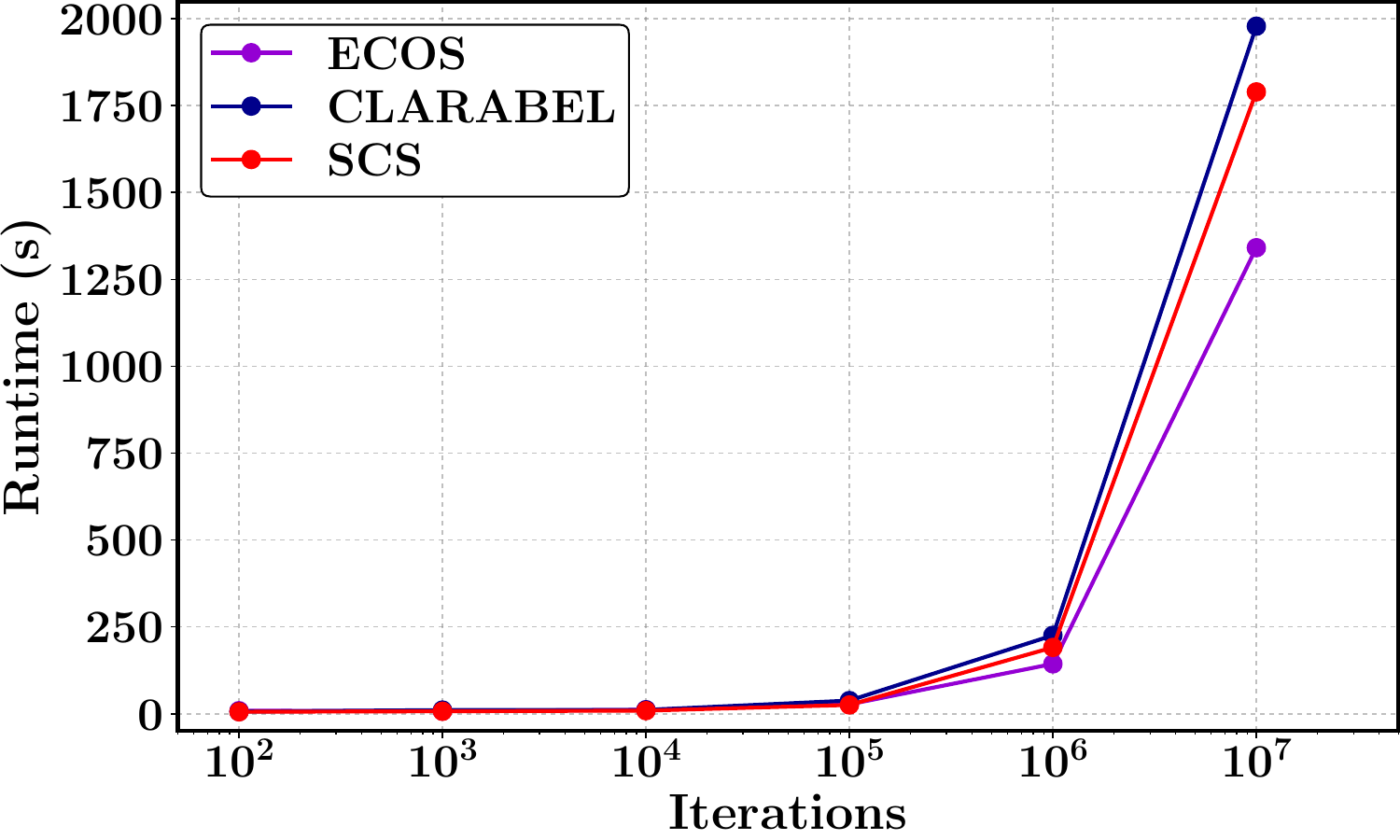}
		\caption{Runtime for typicality assessment of a scenario $(n=4,m=2,d=2,N)$ for different values of $N$ and different solvers available in the \texttt{cvxpy} library.}
		\label{fig:runtime}
	\end{figure}
	
	Any numerical simulations based on frequencies will benefit from a value of $N$ as large as possible. We can see, however, the astonishing jump in runtime from $N=10^6$ to $N=10^7$—from about 250 s for all solvers to 1250-2000 s. We can see that ECOS is the better-performing solver when it comes to runtime. Previous literature in typicality assigned $N$ between $10^9$ and $10^{10}$ as the number of iterations~\cite{de2020strength,deRosier2017}, but this is clearly not feasible in our program. We therefore constrain our attention to values of $N$ of up to $10^6$, which already represents a larger number of iterations than the average contextuality experiment~\cite{grabowecky2022,rafa-photonic,aloy24}.
	
	We then assign different non-zero thresholds to $r$ below which a scenario is deemed classical. We test it by comparing how different solvers behave under different values of the threshold for different iteration numbers $N$. For 4 pure states and 2 binary projective measurements, their performances are summarized in Fig.~\ref{fig:thresh1}. From the plots, one can see that SCS is the worst-performing solver, deeming the data from various trials as nonclassical even for the highest classical threshold considered ($r>10^{-6}$). 
	
	It is expected that SCS would perform poorly since it is best suited for linear programs with very large matrices, which is not the case for this project. ECOS and CLARABEL perform better compared to SCS, reaching the expected typicality values for the bound $r>10^{-7}$. A further tuning could be performed over the solving parameters (such as adjusting the maximum iterations for each run of the linear program or the tolerance boundaries for convergence), but for the purposes of this paper, this threshold is reasonable enough.
	
	\begin{figure*}[htb!]
		\centering
		\begin{tabular}{cc}
			a) \adjustbox{valign=t}{\includegraphics[width=0.35\linewidth]{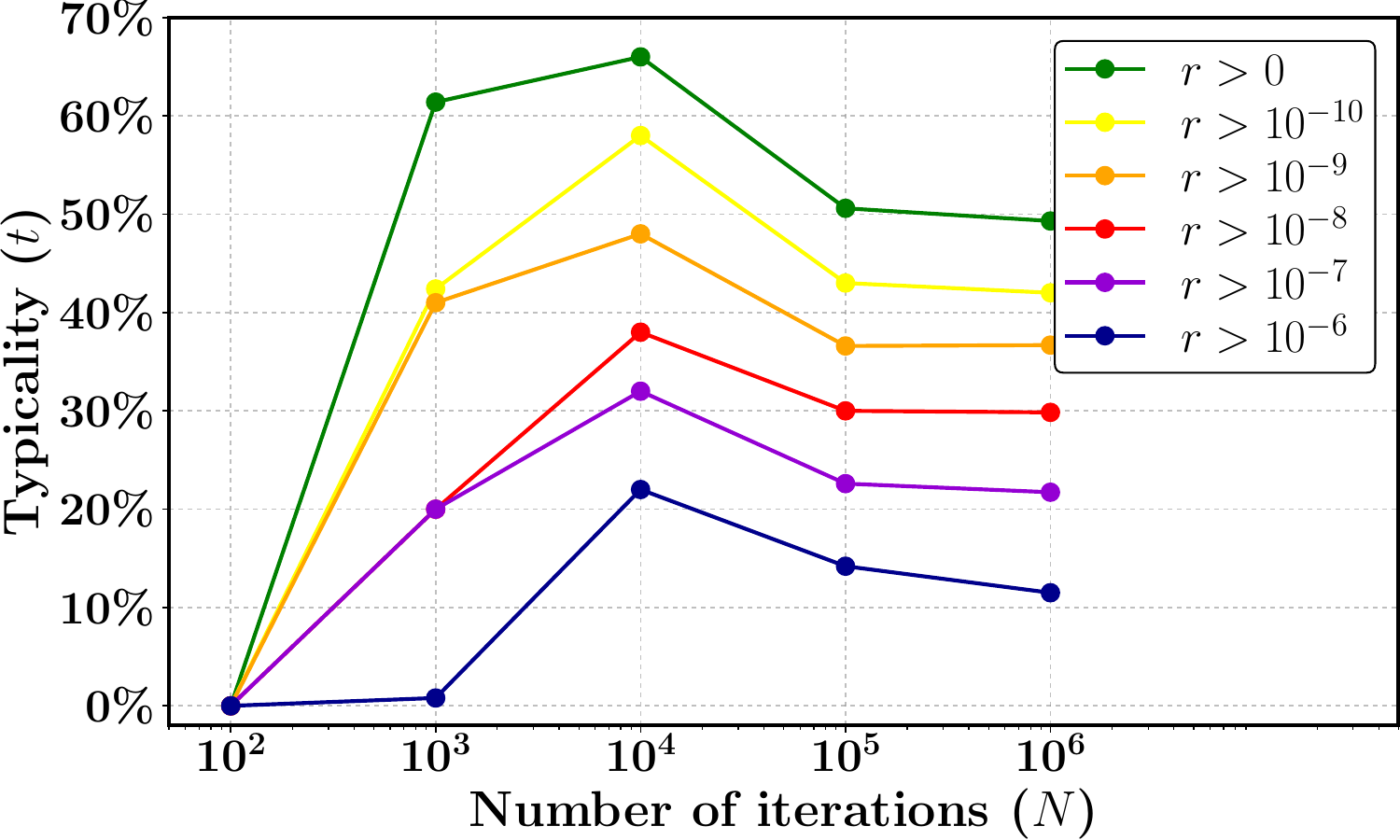}} & b) \adjustbox{valign=t}{\includegraphics[width=0.35\linewidth]{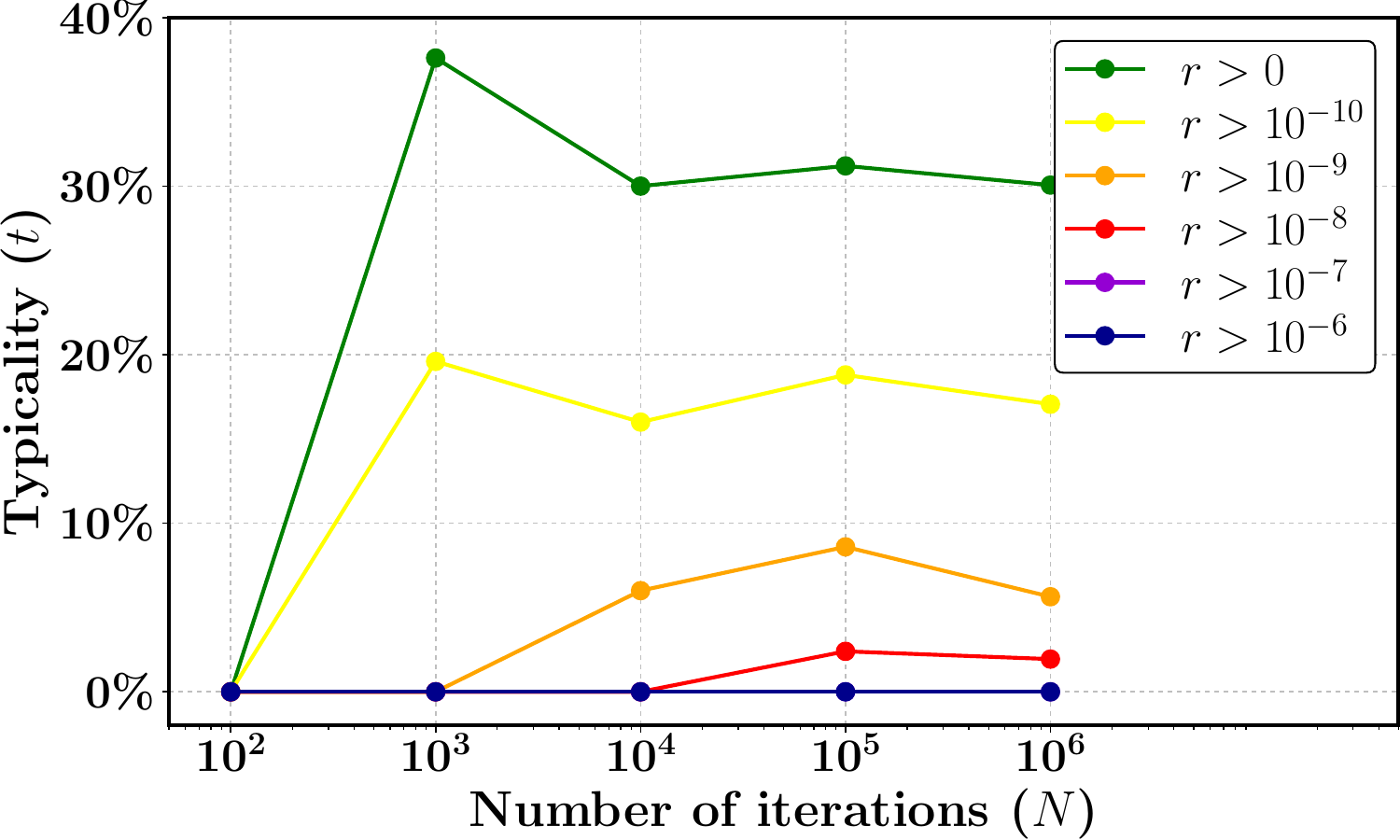}}
			\\
			c) \adjustbox{valign=t}{\includegraphics[width=0.35\linewidth]{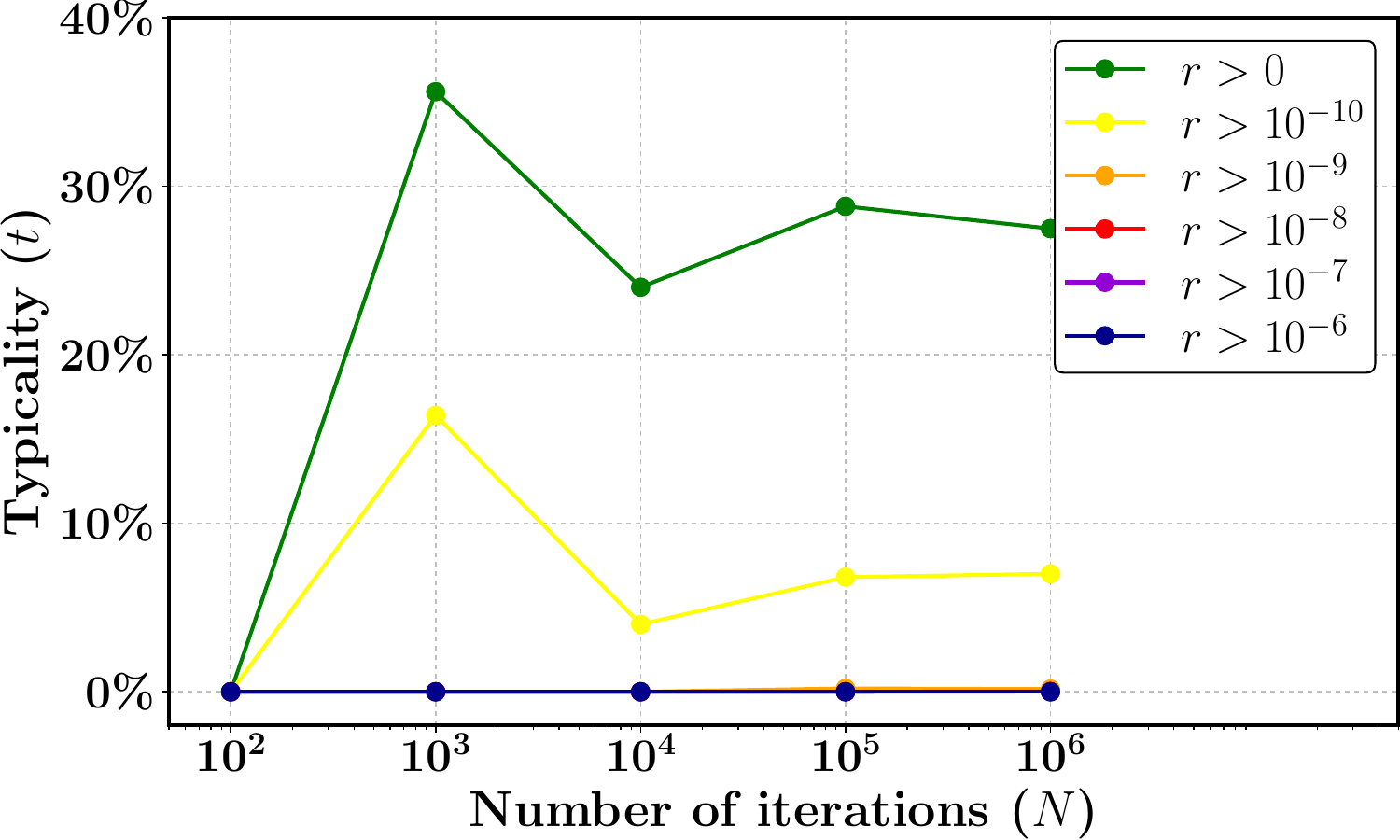}}
		\end{tabular}
		\caption{Plots of typicality of contextuality for a scenario $(n=4,m=2,d=2,N)$ with varying numbers of iterations $N$ for $n$ pure states and $m$ binary projective measurements for (a) SCS, (b) CLARABEL, and (c) ECOS, and considering different thresholds for $r$ (i.e., different options for the minimum value of $r$ over which the data is deemed contextual, which in an ideal theoretical scenario is $r=0$).}
		\label{fig:thresh1}
	\end{figure*}
	
	Since we are also concerned with scenarios with mixed states and POVMs, we also performed a comparative analysis of ECOS and CLARABEL for this case. The target value for typicality remains unchanged. The results are displayed in Fig.~\ref{fig:thresh2}, where we can see by comparing with Fig.~\ref{fig:thresh1} that both solvers reach the expected typicality for $r>10^{-7}$ both for the case of pure states and projective measurements and the case of mixed states and POVMs, although ECOS does so for lower thresholds as well. We therefore adopt this value as the threshold for classicality and settle for ECOS as the solver (due to its better performance in the runtime analysis; see Fig.~\ref{fig:runtime}), automatically calling CLARABEL instead if ECOS fails to solve the linear program for any reason. We also emphasize that this bound is of theoretical relevance, but in practical experiments $r=10^{-7}$ will seldom be detectable with the current technology. In practice, an experimentalist could (and should) manually adjust this bound to fit the capabilities of their laboratory.
	
	\begin{figure*}[htb!]
		\centering
		\begin{tabular}{cc}
			a) \adjustbox{valign=t}{\includegraphics[width=0.35\linewidth]{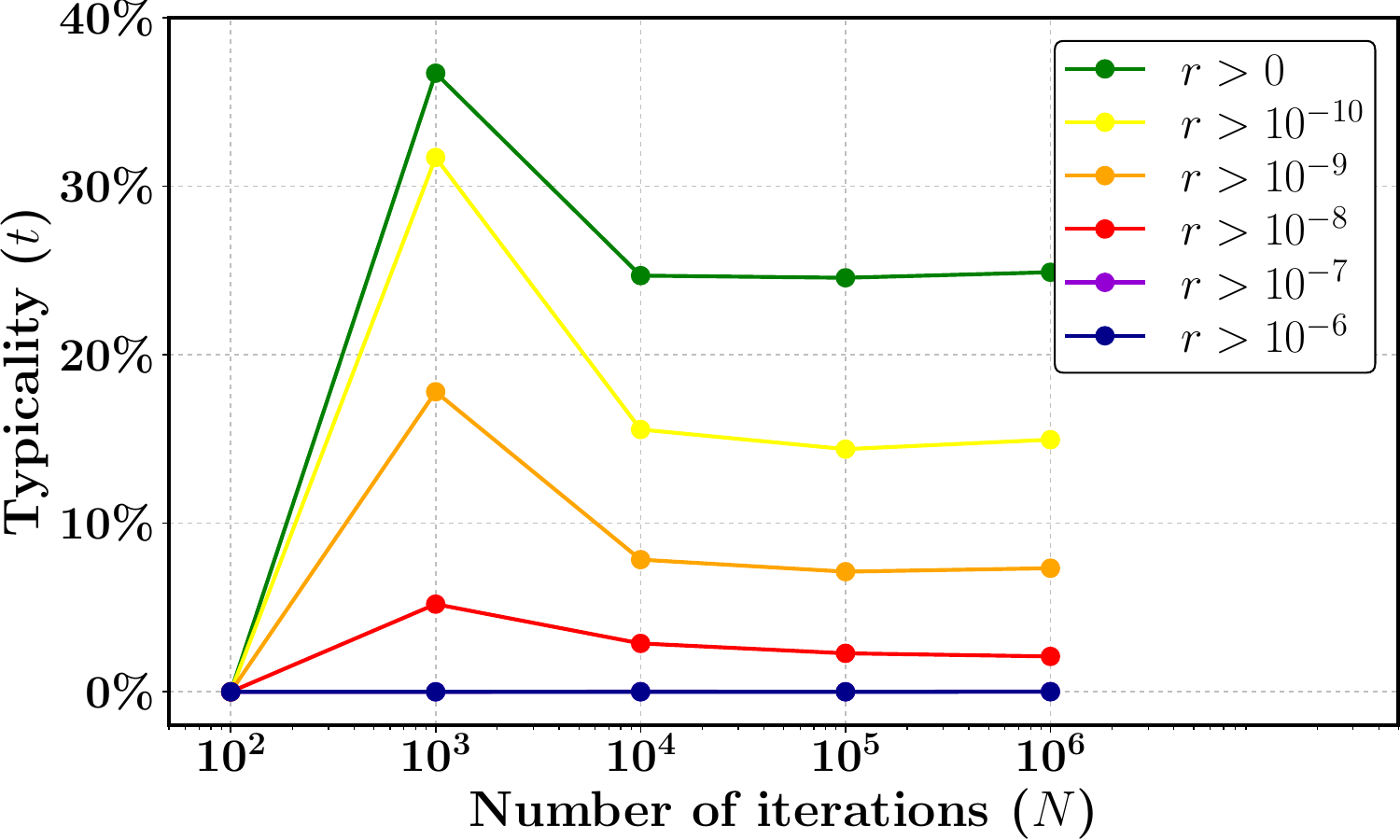}} &
			b) \adjustbox{valign=t}{\includegraphics[width=0.35\linewidth]{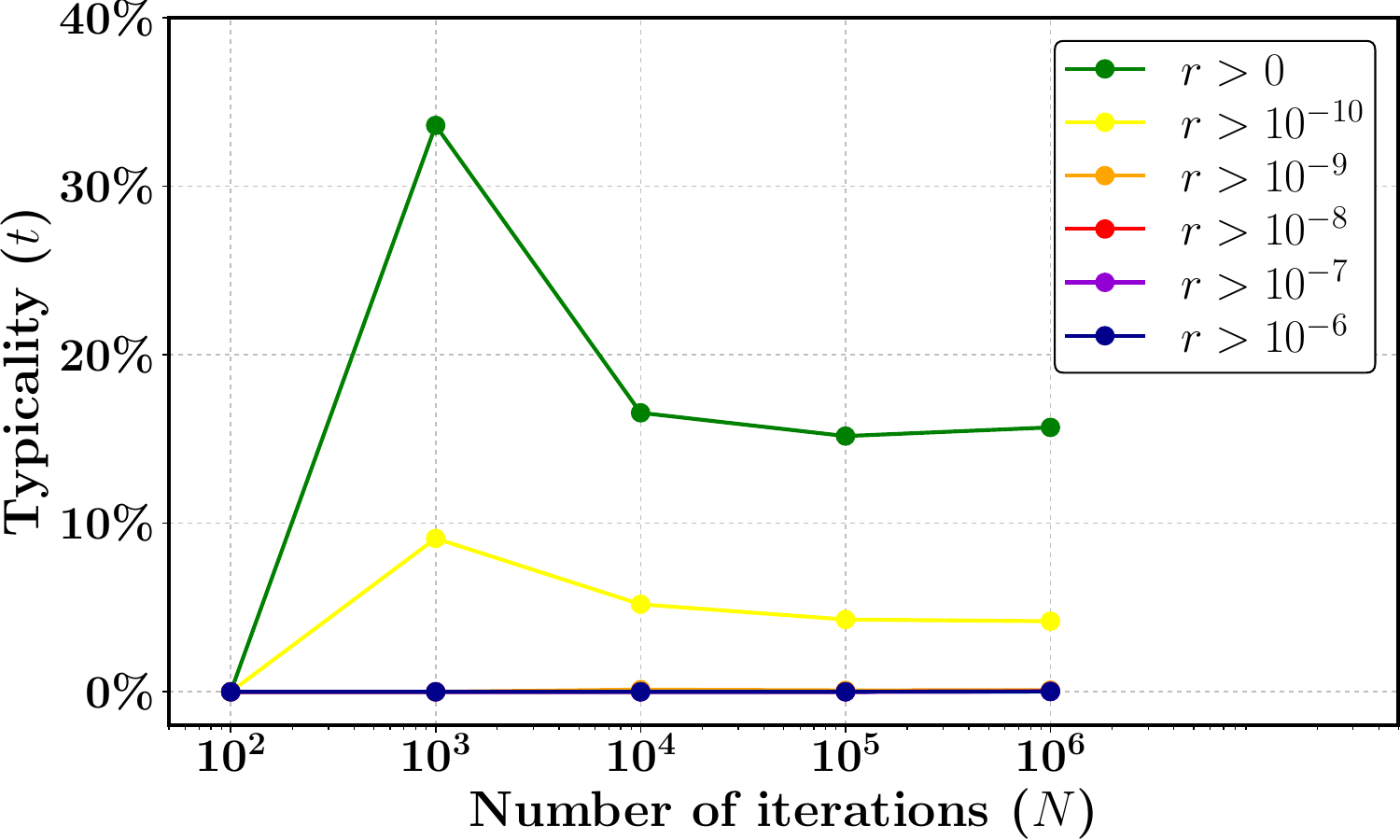}}
		\end{tabular}
		\caption{Plots of typicality for a scenario $(n=4,m=2,d=2,N)$ with varying numbers of iterations $N$ for $n$ mixed states and $m$ binary POVMs for (a) CLARABEL and (b) ECOS, and considering different thresholds for $r$.}
		\label{fig:thresh2}
	\end{figure*}
	
	The number $N=10^6$ is therefore chosen to ensure a decent statistical sample while still ensuring that the computational time for our analysis was feasible. We are aware that this number drastically changes in the literature of typicality (for instance, it ranges from $10^5$ to $10^{10}$ in Ref.~\cite{de2020strength}). This is evidently another parameter that can be adjusted manually to fit the description in the laboratory, and we discuss throughout the text whenever a quantity should be sensitive to this number. When we say that ``the typicality of contextuality for a scenario is $t$'', this should hence be understood as \emph{out of $N$ random samples of $n$ states and $2m$ effects for a Hilbert space with dimension $d$, $t\cdot N$ of them yielded $r>10^{-7}$ for the linear program in Ref.}~\cite{selby2024linear}.\\
	
	Finally, let us comment on the confidence level of our typicality results. Since our data constitutes a series of Bernoulli trials (i.e., success-failure experiments), we can employ binomial proportion confidence intervals to estimate, with some confidence level lower than 100\%, the interval range within which the true value of $t_{(n,m,d)}$ lies. For our analysis, we estimate the lower bound of the Wilson score interval~\cite{wilson} for the successful tries in our sampling for a confidence level of $99\%$. Given $t_{(n,m,d;N)}$ and $N_s$, the number of tries that actually yield valid typicality assessments (i.e., excluding trials that yield inaccurate or invalid contextuality assessments), the lower bound of the Wilson score interval is given by
	\begin{equation}
		\begin{split}
		t_{(n,m,d)}&\geq_{99\%}\frac{1}{1+\frac{z^2_{0.99}}{N_s}}\left(t_{(n,m,d;N)}+\frac{z_{0.99}^2}{2N_s}\right.\\
		&\left.-\frac{z_{0.99}}{2N_s}\sqrt{4N_s t_{(n,m,d;N)}(1-t_{(n,m,d;N)})+z_{0.99}^2}\right),
		\end{split}
	\end{equation}
	where $z_{0.99}$ is a function of the adopted confidence level (in this case, 99\%). In other words, when we say $t_{(n,m,d;N=10^6)}\approx100\%$ in this work, we mean that the Wilson's score interval criterion would tell us that typicality $t_{(n,m,d)}$ is at least 99.999\% with 99\% confidence, assuming that all $10^6$ trials were unproblematic.

	\section{Analytical proofs}\label{app:proofs}
	
	We begin by proving Proposition~\ref{prop2}.
	
	\upperbound*
	
	\begin{proof}
		Notice that for $n\leq d^2$ or $m\leq d^2/2$ (in the case of dichotomic measurements), $t_{(n,m,d)}=0$ by Lemma~\ref{lemma:TipiZeroForLI}, so we now demonstrate that $t_{(n,m,d)}<1$ for the remaining cases.
		
		Consider the GPT described by the full set of quantum states $\Omega$ for a quantum system of dimension $d$, and its dual $\mathcal{E}$ representing the full set of effects. Hereon, $\Omega$ will be treated as a compact convex body in its affine state space of real dimension $d^2-1$. Fix a sampling model that uniformly samples states in $\Omega$ and effects in $\mathcal{E}$, then for a randomly sampled finite set of effects $\mathcal{E}_m\subsetneq\mathcal{E}$ with $m>d^2/2$ the dual polytope $\mathcal{E}_m^*$ strictly contains $\Omega$ and is full-dimensional with nonzero probability.
		
		Since $\mathcal{E}_m^*$ generically strictly contains $\Omega$, it is possible to select $v_0\in\mathcal{E}_m^*\backslash\Omega$. Moreover, it is always possible to choose a full-dimensional simplex $\triangle\subsetneq\mathcal{E}_m^*$ containing $v_0$ as one of its vertices and the remaining vertices in $\Omega$, such that $\mathsf{Int}(\triangle)\cap\mathsf{Int}(\Omega)\neq\emptyset$. By construction, $\mathsf{Int}(\triangle)\cap\mathsf{Int}(\Omega)\subsetneq\triangle$ and $\mathcal{E}_m\subsetneq\triangle^*$, so any GPT fragment contained in these sets of states and effects will be simplex-embeddable.
		
		By  construction, there exists a boundary point $\omega\in\partial\Omega$ such that $\omega\in\mathsf{Int}(\triangle)$. Therefore, there is an $\varepsilon$-ball $B_\varepsilon(\omega)\subsetneq\triangle$ around $\omega$. It follows that, given $\mathsf{Vol}$ the Lebesgue volume in $\mathbb{R}^{d^2-1}$, $\Omega_\varepsilon:=B_\varepsilon(\omega)\cap\Omega$ contains interior points of $\Omega$ and thus will have positive Lebesgue volume. Therefore, there is a finite probability of sampling a state in this region given by
		\begin{equation}
			p:=\frac{\mathsf{Vol}[\Omega_\varepsilon]}{\mathsf{Vol}[\Omega]}>0,
		\end{equation}
		and, by independence, the probability of sampling $\Omega_n$ quantum states in this region is also positive and given by $p^n$. With probability $p^n>0$, then, we end up sampling a set of \sout{pure} quantum states that satisfies $\Omega_n\subset\triangle$, and therefore corresponds to a simplex-embeddable fragment. The typicality of contextuality is therefore bounded by 
		\begin{equation}\label{eq:bound}
			t_{(n,m,d)}\leq1-p^n<1.
		\end{equation}
		
		Similarly, fix a sampling model that uniformly samples \emph{pure} states in $\Omega$ and \emph{pure} effects in $\mathcal{E}$, then for $m>d^2/2$ there is a nonzero probability that a randomly sampled finite set of effects $\mathcal{E}_m$ will be such that $\mathcal{E}^*_m$ is full-dimensional and $\Omega\subsetneq\mathcal{E}_m^*$. We can therefore choose $v_0\in\mathcal{E}_m\backslash\Omega$ and construct a simplex $\triangle$ containing $v_0$ as one of its vertices and the remaining ones in the boundary $\partial\Omega$, such that dim$(\mathsf{AffHull}[\triangle])>2(d-1)$. This ensures that $\mathsf{Int}(\triangle)\cap\partial\Omega\neq\emptyset$, and moreover $\mathsf{Int}(\triangle)\cap\partial\Omega\subsetneq\triangle$.
		
		By construction, there is a boundary point $\omega\in\partial\Omega$ such that $\omega\in\mathsf{Int}(\triangle)$. Therefore, there is an $\varepsilon$-ball $B_\varepsilon(\omega)\subset\triangle$ around $\omega$. It follows that, given $\mu$ the Haar measure in $\partial\Omega$, $\Omega_\varepsilon:=B_\varepsilon(\omega)\cap\partial\Omega$ contains points of $\partial\Omega$ and thus will have positive Haar measure. Therefore, there is a finite probability of sampling a state in this region, given by
		\begin{equation}
			p:=\frac{\mu(\Omega_\varepsilon)}{\mu(\Omega)}>0,
		\end{equation}
		and by independence, the probability of sampling $\Omega_n$ pure states in this region is also positive and given by $p^n>0$, so the bound on typicality takes on the same form as Eq.~\ref{eq:bound}.
	\end{proof}
	
	Naturally, the upper bound in Eq.~\eqref{eq:bound} is not tight: there are infinitely many such simplices, and in this proof we restricted the scope of simplex-embeddings to those where the simplices have the same dimension as the sampled set of quantum states.
	
	We now proceed to prove Proposition~\ref{prop3}.
	
	\asymptotic*

	\begin{proof}
		Let us begin by proving Eq.~\eqref{eq:lim-effects}.
		
		Similarly to the previous proof, consider the GPT described by the full set of quantum states $\Omega$ for a quantum system of dimension $d$, and its dual $\mathcal{E}$ representing the full set of effects. Hereon, $\Omega$ and $\mathcal{E}$ will be treated as a compact convex bodies in their affine spaces of real dimension $d^2-1$. Notice that for any $d$, there is no simplex $\triangle$ and embeddings $\iota:\Omega\to\triangle$ and $\kappa:\mathcal{E}\to\triangle$ such that $(\iota(\Omega),\kappa(\mathcal{E})\subseteq(\triangle,\triangle^*)$.            
		
		Fix a sampling model that uniformly samples pure states in $\partial\Omega$ and pure effects in $\partial\mathcal{E}$. From Ref.~\cite{zhang2025reassessing}, any set $\Omega_n$ of $n>d^2$ randomly sampled states will be such that the fragment $(\Omega_n,\mathcal{E})$ does not admit of a simplex embedding. Assume that there is $\triangle$ and $\iota,\kappa$ such that the fragment $(\Omega_n,\mathcal{E}')$ admits of a simplex embedding for some convex hull $\mathcal{E}'$ of pure effects. Then the result in Ref.~\cite{zhang2025reassessing} imposes that $\mathcal{E}'\neq\mathcal{E}$, and since $\mathcal{E}$ is the set of all possible effects for this system, it must be that $\mathcal{E}'\subsetneq\mathcal{E}$, and therefore $\partial\mathcal{E}'\subsetneq\partial\mathcal{E}$. This means that $\partial\mathcal{E}'\cap\partial\mathcal{E}=\partial\mathcal{E}'$. It follows that, given $\mu$ the Haar measure in $\partial\mathcal{E}$, the likelihood of sampling an effect in $\partial\mathcal{E}'$ is
		\begin{equation}
			p_{\Omega_n,\triangle}:=\frac{\mu(\partial\mathcal{E}')}{\mu(\partial\mathcal{E})}<1.
		\end{equation}
		
		The probability of finding such an effect for any given $\Omega_n$ will be given by maximising $p_{(\Omega_n,\triangle)}$ over all possible simplex-embeddings, and then averaging that over all possible $\Omega_n$. That is,
		\begin{equation}
			p:=\int d\mu(\Omega_n)\max_{\triangle,\iota,\kappa}\frac{\mu(\partial\mathcal{E}')}{\mu(\partial\mathcal{E})}<1.
		\end{equation}
		The probability of independently sampling $m$ effects not in this setup is therefore given by
		\begin{equation}
			t_{(n,m,d)}=1-p^{m},
		\end{equation}
		and in the asymptotic limit,
		\begin{equation}
			\lim_{m\to\infty}t_{(n,m,d)}=1.
		\end{equation}
		
		The proof follows similarly for Eq.~\ref{eq:lim-states}. From the result in Ref.~\cite{zhang2025reassessing}, for any set $\mathcal{E}_m$ of dichotomic pure effects with $m\geq d^2/2$, linear dependences will be found among them and so $(\Omega,\mathcal{E}_m)$ will not admit of a simplex embedding. Assume therefore that there is a $\triangle$ and $\iota,\kappa$ such that $(\Omega',\mathcal{E}_m)$ admits of a simplex embedding for some convex hull $\Omega'$ of pure states. The result in Ref.~\cite{zhang2025reassessing} imposes that $\Omega'\neq\Omega$, and by the same reasoning as previously it follows that $\partial\Omega'\cap\partial\Omega=\partial\Omega'$, so the likelihood of sampling a state in $\partial\Omega'$ given the Haar measure $\mu$ in $\partial\Omega$ is
		\begin{equation}
			p_{\mathcal{E}_m,\triangle}:=\frac{\mu(\partial\Omega')}{\mu(\partial\Omega)}<1.
		\end{equation}
		Again we must maximise it over all possible simplex embeddings and then average over all possible sets $\mathcal{E}_m$, but this probability remains lower than one:
		\begin{equation}
			p:=\int d\mu(\mathcal{E}_m)\max_{\triangle,\iota,\kappa}\frac{\mu(\partial\Omega')}{\mu(\partial\Omega)}<1,
		\end{equation}
		so the probability of independently sampling $n$ states not in this setup is therefore given by
		\begin{equation}
			t_{(n,m,d)}:=1-p^n,
		\end{equation}
		so in the asymptotic limit
		\begin{equation}
			\lim_{n\to\infty}t_{(n,m,d)}=1.
		\end{equation}
	\end{proof}
	
	\section{Open-source repository}\label{app:repo}
	
	The repository with the codes and data generated in this work is available at Ref.~\cite{github}. Notice that the computations require the linear program introduced in Ref.~\cite{selby2024linear} and available at Ref.~\cite{cavalcanti-git}. Make sure that this program and its requirements are working properly before any attempt to run the functions in this project.
	
	The repository requires \texttt{qutip} version 5.2 or above. It samples random states as described in Sec.~\ref{sec:methods} and Appendix~\ref{app:qutip} through the functions \texttt{random\_density\_matrix}, \texttt{random\_effects}, \texttt{fixed\_effects} and \texttt{random\_unitary}. In particular, \texttt{random\_density\_matrix} will sample either (i) a pure state uniformly via Haar measure if \texttt{pure = True} is provided as an input or (ii) a full-rank mixed density operator by Ginibre sampling, employing the exclusion criterion if purity lies out of the interval determined by the inputs (\texttt{upperbound} and \texttt{lowerbound}). Similarly, \texttt{random\_effects} will sample either (i) $m$ projective measurements via Haar measure if \texttt{pure = True}; or (ii) $m$ POVMs with $k$ outputs (with $k=2$ by default) by Ginibre sampling, employing the exclusion criterion if the sharpness (Eq.~\ref{eq:sharpness}) lies out of the interval determined by the inputs (\texttt{upperbound} and \texttt{lowerbound}).
	
	We define a series of internal functions that build up to estimating typicality through parallel processing, culminating in the functions \texttt{Parallel\_Typicality} and \texttt{Parallel\_Typicality\_fixed}. \texttt{Parallel\_Typicality} takes in as parameters the number of preparations $n$; the number of measurements $m$; the dimension $d$ of the Hilbert space; the upper and lower bounds on purity of the sampled states; two Boolean variables, \texttt{pure\_preps} and \texttt{pure\_meas}, that regulate whether the states and effects are sampled from the Haar measure or through Ginibre sampling; the iteration number $N$; and an optional argument related to the number of workers employed in the multiprocessing. \texttt{Parallel\_Typicality\_fixed}, by its turn, does not take in $m$, $d$ or \texttt{pure\_meas}, since the measurements (and the dimension of the Hilbert space) are fixed in this case. By default, the number of workers is set through the function \texttt{multiprocessing.cpu\_count()}, but it should be adjusted manually if running on a shared system or in a machine with limited resources.
	
	The repository also provides the functions that generated the data for all plots in this paper. Reproducibility is subject to numerical instabilities and hardware peculiarities. We also include a function \texttt{Typicality\_POM} for the parity-oblivious multiplexing analysis made in Sec.~\ref{sec:POM}, and a function \texttt{wilson\_score\_interval} to estimate the lower bound of the Wilson score interval~\cite{wilson} for the computed frequencies.
	
	The core function of this repository is \texttt{Minimalpreps}. As arguments, it will take the same arguments as \texttt{Parallel\_Typicality}, with the exception of the number $n$ of preparations. It will then loop the function \texttt{Parallel\_Typicality} for an increasing number of preparations $n$, starting at $n=4$, until it returns a typicality greater than $99\%$. The function then returns the number $n$ for which this happened. Notice that if $m$ is set to be 1, the loop would never end (since any scenario would always be simplex-embeddable for a single measurement), and therefore the code will return a warning. \\
	
	Additional information, such as installation, licensing, and example usage, is provided in the README.  The data generated for this manuscript are also available in a compressed file in the repository.
\end{document}